\documentclass[12pt,reqno]{amsart}
\usepackage{lmodern}
\usepackage[T1]{fontenc}

\usepackage[numbers]{natbib}
\usepackage{amsmath,amssymb,latexsym} 
\usepackage{fullpage}
\usepackage{etoolbox}
\usepackage{enumitem}
\usepackage{color}
\usepackage[colorlinks,linkcolor=blue,citecolor=magenta]{hyperref}
\usepackage[colorinlistoftodos,bordercolor=orange,backgroundcolor=orange!20,linecolor=orange,textsize=scriptsize]{todonotes}
\usepackage{soul}
\usepackage{microtype}
\usepackage{yhmath}
\usepackage{float}
\usepackage{algorithm}
\usepackage[noend]{algpseudocode}
\usepackage{bbold}
\usepackage{booktabs}
\usepackage{comment}
\usepackage{mathtools}
\usepackage{calc}
\usepackage{float}
\usepackage{pstricks}
\usepackage{pstricks-add}

\usepackage{geometry}
\geometry{margin=0.9in}

\newtheorem{theorem}{Theorem}

\newtheorem{proposition}{Proposition}

\theoremstyle{definition}

\newcommand{\Real}{{\mathbb R}}

\newcommand{\np}{\bar{n}}

\DeclareMathOperator*{\argmax}{arg\,max}

\DeclareMathOperator*{\argmin}{arg\,min}
\DeclareFontFamily{U}{mathx}{}
\DeclareFontShape{U}{mathx}{m}{n}{ <-> mathx10 }{}
\DeclareSymbolFont{mathx}{U}{mathx}{m}{n}
\DeclareFontSubstitution{U}{mathx}{m}{n}
\DeclareMathAccent{\widecheck}{0}{mathx}{"71}

\newcommand{\setap}{\widecheck{A}}
\newcommand{\mFA}{\text{F1}}
\newcommand{\mFB}{\text{F2}}
\newcommand{\incoming}[1]{\delta^-(#1)}
\newcommand{\outgoing}[1]{\delta^+(#1)}

\newcommand{\prob}{\texttt{MPC-ARC}}
\newcommand{\fprob}{\texttt{MFC-ARC}}
\newcommand{\polyt}{\mathcal{P}_{\fprob}}
\newcommand{\bmsection}[1]{\section*{\textbf{#1}}}

\newcommand{\codef}[1]{\texttt{#1}}

\title{On a Variant of the Minimum Path Cover Problem in Acyclic Digraphs: Computational Complexity Results and Exact Method}

\author{Nour ElHouda Tellache}
\address{N.E.H. Tellache, Decision Support \& Operations Research Group, Department of Informatics, University of Fribourg, Fribourg, Switzerland}
\email{nourelhouda.tellache@unifr.ch}

\author{Roberto Baldacci}
\address{R. Baldacci, College of Science and Engineering, Hamad Bin Khalifa University, Qatar Foundation, Doha, Qatar}
\email{rbaldacci@hbku.edu.qa}

\keywords{constrained minimum path cover; directed acyclic graphs; complexity; valid inequalities; separation problems; branch-and-cut}

\begin{document}
	
	\maketitle
	
	\begin{abstract}
		The Minimum Path Cover (\texttt{MPC}) problem consists of finding a minimum-cardinality set of node-disjoint paths that cover all nodes in a given graph. We explore a variant of the \texttt{MPC} problem on acyclic digraphs (DAGs) where, given a subset of arcs, each path within the \texttt{MPC} should contain at least one arc from this subset. We prove that the feasibility problem is strongly $\mathcal{NP}$-hard on arbitrary DAGs, but the problem can be solved in polynomial time when the DAG is the transitive closure of a path.
		Given that the problem may not always be feasible, our solution focuses on covering a maximum number of nodes with a minimum number of node-disjoint paths, such that each path includes at least one arc from the predefined subset of arcs. This paper introduces and investigates two integer programming formulations for this problem. We propose several valid inequalities to enhance the linear programming relaxations, employing them as cutting planes in a branch-and-cut approach. The procedure is implemented and tested on a wide range of instances, including real-world instances derived from an airline crew scheduling problem, demonstrating the effectiveness of the proposed approach.
	\end{abstract}
\section{Introduction}
A \textit{path cover} of a graph refers to a set of node-disjoint paths that collectively cover all the nodes of the graph. The Minimum Path Cover (\texttt{MPC}) problem involves determining a path cover with the smallest possible cardinality. The \texttt{MPC} problem and its various variants have been extensively studied due to their theoretical significance and their wide-ranging applications in fields such as program testing~\cite{Ntafos1979}, code optimization~\cite{Boesch1977}, scheduling~\cite{Colbourn1985,zhan2016,tellache2017}, bioinformatics~\cite{Eriksson2008, Bao2013, song2013, Rizzi2014, Chang2015}, and many others.

The \texttt{MPC} problem is $\mathcal{NP}$-hard on arbitrary graphs since it generalizes the Hamiltonian path problem~\cite{GJ79}. $\mathcal{NP}$-hardness results have been established for specific graph classes, including planar cubic triply-connected graphs~\cite{Garey1976}, chordal bipartite and chordal split graphs~\cite{MULLER1996}, and circle graphs~\cite{DAMASCHKE1989}. However, polynomial-time algorithms exist for trees~\cite{boesch1974}, interval graphs~\cite{ARIKATI1990}, cographs~\cite{lin1995}, block graphs~\cite{chang2002}, and cocomparability graphs~\cite{DCBD13}.
One noteworthy positive result dates back to the 1950s and applies to acyclic digraphs (DAGs). This polynomial algorithm is based on the work of Dilworth~\cite{dilworth1950} and Fulkerson~\cite{Fulkerson1956} and involves finding a maximum matching in a bipartite graph constructed from the given DAG.

We study the following variant of the \texttt{MPC} problem on a DAG (denoted \prob): given a DAG $G=(V,A)$ with node set $V=\{1,2,\dots,n\}$ and a subset $\setap$ of the arc set $A$, we seek an \texttt{MPC} such that every path in the path cover has at least one arc from $\setap$. A path is \textit{feasible} if it has an arc from $\setap$. We denote the problem of checking whether a graph has a path cover of feasible paths by \texttt{FPC-ARC}. Since \texttt{MPC-ARC} may not always be feasible, we focus on solutions to the problem of maximizing the number of covered nodes using a minimum number of feasible paths, which we refer to as \fprob.

The \texttt{MPC-ARC} problem arises naturally in various contexts, including the airline crew scheduling problem~\cite{Gopalakrishnan2005, tellache2024linear}. Some scheduling systems aim to construct pilot schedules that minimize the number of pilots required to cover all flights. A schedule consists of a sequence of pairings (i.e., sequences of flights starting and ending at the same crew base) that must comply with airline regulations, collective agreements, and pre-assigned duties. Among these constraints is the requirement that each schedule includes a rest block of at least $r$ consecutive days off. We can associate a graph with this problem, where each node represents a pairing, and an arc exists between two pairings if they can be operated sequentially. The resulting graph is acyclic, and if the arcs spanning more than $r$ days are identified as the set $\setap$, then a feasible schedule for a pilot corresponds to a path in the graph that includes at least one arc from $\setap$. Note that although crew scheduling is subject to numerous constraints, analyzing the complexity of the problem with this specific rest-block constraint provides insight into the minimal components that contribute to the hardness of the crew scheduling problem.

Several constrained path problems, with or without the requirement to cover all nodes in DAGs have been investigated in the literature. Ntafos and Hakimi~\cite{Ntafos1979} demonstrated that finding a minimum-cardinality set of paths covering ``required pairs'' of nodes (i.e., ensuring that for each required pair, at least one path contains both nodes of the pair) is  $\mathcal{NP}$-hard. They also presented a polynomial-time algorithm for the case of ``required paths'' (i.e., ensuring that for each required path, at least one path contains the required path as a subpath). Note that neither of these problems requires covering all the nodes in the DAG. The authors also introduced the concept of ``must paths''. They proved that finding a minimum-cardinality set of paths covering all nodes while respecting must-path constraints (i.e., for each must path originating from node $i$, a path visiting $i$ should contain the must path as a subpath) can be solved in polynomial time. Finally, they examined the problem of finding a minimum-cardinality set of paths covering all nodes while adhering to ``impossible paths'' constraints (i.e., ensuring no path contains any impossible path) and showed that this problem is \(\mathcal{NP}\)-hard.
Beerenwinkel et al.\cite{Beerenwinkel2015} explored the ``required pairs'' problem introduced in~\cite{Ntafos1979} and analyzed the complexity based on a parameter characterizing the pairs called the ``overlapping degree''. They also investigated the same problem with the additional constraint that all nodes should be covered; they showed that it is $\mathcal{NP}$-hard to decide if there exists a solution consisting of at most three paths, and they gave a polynomial-time algorithm for computing a solution with at most two paths. In parallel to this work, Rizzi et al.\cite{Rizzi2014} studied two generalizations of the ``required paths'' and ``required pairs'' problems. They found that the generalization of the former problem and its weighted version can be solved in polynomial time. In generalizing the latter, which involves covering pairs of subpaths, the authors proved that even if the problem remains $\mathcal{NP}$-hard, it is fixed-parameter tractable in the total number of constraints. In all the mentioned works, the paths are not necessarily disjoint.

While other variants of constrained path problems on DAGs have been addressed in the literature and some algorithms have been proposed (see, e.g.,~\cite{song2013,Bao2013}), we have not identified any results that could inform the complexity of \texttt{MPC-ARC}.

The main contributions of this paper can be summarized as follows:
\begin{itemize} 
	\item We demonstrate that the \texttt{FPC-ARC} problem is strongly $\mathcal{NP}$-hard on arbitrary DAGs. However, when restricted to transitive closures of paths, \texttt{MPC-ARC} can be solved in polynomial time.
	\item We develop two mathematical formulations for \fprob. Notably, one formulation incorporates ``infeasible path elimination'' constraints.
	\item We derived several valid inequalities to strengthen the formulations and investigated the corresponding separation problems. The resulting separation algorithms are integrated into a branch-and-cut exact solution approach.
	\item We conducted extensive computational experiments on various instance sets to validate the effectiveness of the proposed exact method and the classes of valid inequalities.
\end{itemize}

The paper is organized as follows: The next section investigates the computational complexity of \texttt{MPC-ARC}. Section \ref{sec:mathforms} presents two mathematical formulations for \fprob, followed by Section \ref{sec:validineq} describing valid inequalities. Section \ref{sec:exactm} presents the exact algorithm. Section \ref{sec:compres} reports the results of an extensive computational analysis on various instance sets. Finally, Section \ref{sec:con} concludes the study and outlines future research directions. Additional details about the computational results are reported in the appendix.

\section{Complexity results} \label{CR}

In this section, we establish that the \texttt{FPC-ARC} problem is strongly $\mathcal{NP}$-hard on general DAGs. However, when restricted to the transitive closures of paths, the \texttt{MPC-ARC} problem becomes solvable in polynomial time. The following theorems formalize these results.

\begin{theorem}\label{FPCnp-hard}
	\texttt{FPC-ARC} is $\mathcal{NP}$-hard in the strong sense for arbitrary DAGs.
\end{theorem}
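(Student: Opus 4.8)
The plan is to prove hardness by a polynomial (Karp) reduction from a purely combinatorial decision problem that carries no numerical data, so that ordinary $\mathcal{NP}$-hardness of the reduction immediately yields hardness in the strong sense (for an instance described only by a digraph and a subset $\setap$ of its arcs, the weak/strong distinction collapses). A natural source is the Exact Cover by 3-Sets problem (\texttt{X3C}), equivalently 3-Dimensional Matching: given a ground set $U$ with $|U|=3q$ and a family $\mathcal{C}$ of 3-element subsets, decide whether some subfamily partitions $U$. The intended correspondence is that \emph{selecting} a triple $C\in\mathcal{C}$ is encoded by a single feasible path that covers the three element-nodes of $C$, while \emph{not selecting} $C$ is encoded by a short, self-contained feasible path that covers only the private gadget nodes of $C$. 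The combination of node-disjointness and total coverage of the element-nodes would then force the selected triples to form an exact cover.

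Concretely, I would fix a topological order on the element-nodes $v_1,\dots,v_{3q}$ and, for each triple $C=\{e_1,e_2,e_3\}$ (listed in increasing order), introduce a private source $s_C$ placed before all element-nodes and a private sink $t_C$ placed after them; this guarantees acyclicity. The gadget would contain a ``traversal'' $s_C\to v_{e_1}\to v_{e_2}\to v_{e_3}\to t_C$ and a ``bypass'' $s_C\to t_C$, and I would declare the two arcs leaving $s_C$ to be the only $\setap$-arcs incident to the gadget. This placement makes both modes feasible (each uses a special arc at $s_C$) and, crucially, forces any path that covers an element-node to begin at some $s_C$ through a special arc, because the element-internal arcs carry no member of $\setap$ and the bypass path $s_C\to t_C$ (source to sink) covers no element. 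This is the mechanism that ties element coverage to triple selection, and the reverse implication (exact cover $\Rightarrow$ feasible path cover) is then the routine direction.

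The main obstacle, and the step demanding the most care, is the forward implication: showing that feasible path covers correspond \emph{exactly} to exact covers, with no spurious solutions. The difficulty is intrinsic: the element-nodes must be shared among all triples that contain them (this sharing is precisely what enforces ``covered at most once''), yet a shared node with in- and out-arcs belonging to several triples lets a single path \emph{switch lanes}, entering along the arcs of one triple and leaving along those of another. Indeed, one can exhibit small \texttt{X3C} instances with \emph{no} exact cover in which such spliced paths nevertheless cover every element-node feasibly, so the naive gadget above is incorrect as stated. To repair it I would give each triple a fully \emph{private} traversal (private intermediate nodes and arcs) and attach, for every element, a consistency gadget that makes the distinct private lanes impossible to splice together and that admits a feasible covering only when exactly one incident triple is in its selected mode. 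Verifying that this enriched gadget simultaneously (a) remains acyclic, (b) is coverable by feasible paths in both modes, and (c) rules out all lane-switching configurations is the technical heart of the argument; once this no-mixing property is secured, the equivalence with exact covers, and hence the claimed strong $\mathcal{NP}$-hardness of \texttt{FPC-ARC}, follows.
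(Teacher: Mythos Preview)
Your proposal correctly identifies the right source problem (3DM/X3C), correctly notes that strong hardness is automatic since the instance carries no numerical parameters, and correctly anticipates the central technical obstacle, namely that shared element-nodes allow a path to enter through one triple's gadget and exit through another's. However, the proposal stops precisely where the actual work begins: you announce that you would ``attach, for every element, a consistency gadget that makes the distinct private lanes impossible to splice together,'' but you never specify this gadget, and you acknowledge yourself that the naive construction is incorrect. The entire content of the theorem lies in designing a gadget for which the forward implication can be proved; without it, what remains is a plan rather than a proof.

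For comparison, the paper's construction also reduces from 3DM and also has triples sharing element-nodes, but it resolves the lane-switching problem by a completely different mechanism than private lanes plus consistency checkers. Each triple gets four internal nodes and \emph{three} arcs in $\setap$, carefully placed so that if a path covering an element-node ever borrows internal nodes from two different triples, a global counting argument shows that too few $\setap$-arcs remain to cover the leftover internal nodes. In other words, the paper does not prevent splicing locally; it shows that any splice makes the \emph{rest} of the instance infeasible by exhausting the budget of special arcs. This counting trick is the crux, and nothing analogous appears in your outline. If you want to complete your approach instead, you must actually exhibit the consistency gadget and verify properties (a)--(c); until then the argument is incomplete.
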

\begin{proof}
	We prove this result using a reduction from the 3-dimensional matching problem (\texttt{3-DM}), which is known to be $\mathcal{NP}$-complete in the strong sense (see, e.g., \cite{GJ79}). It is defined as follows. Given three disjoint sets $ X $, $ Y $ and $ Z $ each of size $ q $, and a set $ M \subseteq X\times Y \times Z $ of triples. Is there a subset $ M'\subseteq M $ of $ q $ triples such that each element of $ X\cup Y \cup Z $ is contained in exactly one of these triples? 
	
	Given an arbitrary instance of \texttt{3-DM}, we construct an instance of our problem as follows. Let $ X=\{x_{1},x_{2},\ldots,x_{q}\} $, $ Y=\{y_{1},y_{2},\ldots,y_{q}\} $ and $ Z=\{z_{1},z_{2},\ldots,z_{q}\} $. For each triple $m=(x_{i},y_{j},z_{k})\in M $, we construct the graph $G_{m}=(V_{m},A_{m})$, called a \textit{gadget}, depicted in Figure \ref{Gm}. The set $ V_{m} $ contains  4 ``internal'' nodes $ \{a_{1}^{m},a_{2}^{m}\}\cup \{b_{1}^{m},b_{2}^{m}\}$ that are adjacent to the nodes of $G_{m}$ only, and  3 ``external'' nodes $ x_{i} $, $ y_{j} $ and $z_{k}$ that can be shared with the other gadgets. The graph $G=(V,A) $ is obtained from the union of all the gadgets $G_{m} $ as follows:  $G=(V,A)=\bigg (\bigcup\limits_{m\in M}V_{m},\bigcup\limits_{m\in M} A_{m} \bigg) $ and the arcs of $\setap$ are the dashed arcs. Note that the graphs 	
	Given an arbitrary instance of \texttt{3-DM}, we construct an instance of our problem as follows. Let $ X=\{x_{1},x_{2},\ldots,x_{q}\} $, $ Y=\{y_{1},y_{2},\ldots,y_{q}\} $ and $ Z=\{z_{1},z_{2},\ldots,z_{q}\} $. For each triple $m=(x_{i},y_{j},z_{k})\in M $, we construct the graph $G_{m}=(V_{m},A_{m})$, called a \textit{gadget}, depicted in Figure \ref{Gm}. The set $ V_{m} $ contains  4 ``internal'' nodes $ \{a_{1}^{m},a_{2}^{m}\}\cup \{b_{1}^{m},b_{2}^{m}\}$ that are adjacent to the nodes of $G_{m}$ only, and  3 ``external'' nodes $ x_{i} $, $ y_{j} $ and $z_{k}$ that can be shared with the other gadgets. The graph $G=(V,A) $ is obtained from the union of all the gadgets $G_{m} $ as follows:  $G=(V,A)=\bigg (\bigcup\limits_{m\in M}V_{m},\bigcup\limits_{m\in M} A_{m} \bigg) $. The set of arcs $\setap$ consists of the dashed arcs within the gadgets $G_{m}$, and it is defined as $\setap = \bigcup\limits_{m\in M} \{(x_i, a_{1}^{m}), (a_{1}^{m},a_{2}^{m}), (b_{1}^{m},b_{2}^{m})\}$. Thus, $|\setap|=3|M|$. Note that each gadget $G_{m}$ is acyclic, and it can be verified that the union graph $G$ is also acyclic.
	
	\begin{figure}
		\begin{center}
			\psscalebox{1.0 1.0} 
			{
			\includegraphics{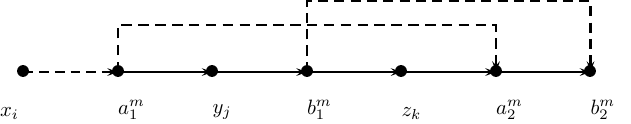}
			}
		\end{center}
		\caption{Gadget $ G_{m}=(V_{m},A_{m}) $}\label{Gm}
	\end{figure}
	
	We show that \texttt{3-DM} has a solution if and only if there exists a solution to \texttt{FPC-ARC}.
	
	\sloppy First assume that \texttt{3-DM} has a solution $ M' $. For each $ m'=(x_{i'},y_{j'},z_{k'}) \in M'$, the nodes of  $ G_{m'}$ are covered by a single path of the form  $(x_{i'}, a_{1}^{m'},y_{j'},b_{1}^{m'}, z_{k'},a_{2}^{m'}, b_{2}^{m'})$, this path is feasible since it includes the arc $(x_{i'}, a_{1}^{m'}) \in \setap$. To complete the path cover, we must account for the remaining nodes, the internal nodes of the $|M|-q$ triples of $\in M \setminus M'$. Let $m \in M \setminus M'$ be one of these triples. The nodes of $m$ can be covered by two paths $(a_{1}^{m}, a_{2}^{m})$ and $(b_{1}^{m}, b_{2}^{m})$. These paths are also feasible since they are arcs from $\setap$.
	
	\begin{figure}
		\begin{center}
			\psscalebox{1.0 1.0} 
			{
		\includegraphics{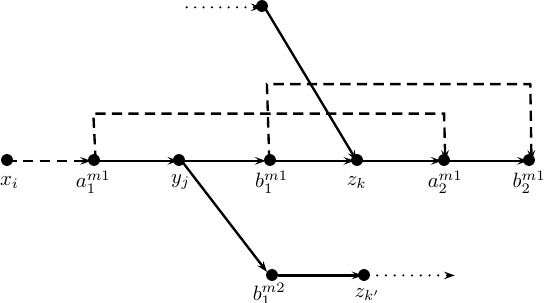}
			}
			\caption{Path of two different gadgets $G_{m_1}$ and $G_{m_2}$}\label{Gmnonf}
		\end{center}
	\end{figure}
	
	Conversely, we assume that \texttt{FPC-ARC} has a solution. Observe that the only way to cover the nodes of $Z$ with a path containing an arc from $\setap$ is by a path, say $P$, starting from a node of $X$, followed by an internal node of the form $a_{1}^{m_1}$, a node from $Y$, an internal node of the form $b_{1}^{m_2}$ and a node from $Z$. Since the paths are disjoint, we are left, after covering all the nodes of $Z$, with at least $4(|M|-q)$ internal nodes. By construction of the gadgets, these nodes cannot be covered by less than $2(|M|-q)$ feasible node-disjoint paths. On the other hand, the remaining arcs of $\setap$ after covering all the nodes of $Z$ are: $|\setap|-(3q+\sum_{i=1}^{q}(|M_i|-1))=3|M|-3q-|M|+q=2(|M|-q)$, where $M_i=\{(x,y,z) \in M \colon x=x_i\}$. Therefore, the path $P$ should also pass by the internal nodes of the form $a_{2}^{m_3}$ and $b_{2}^{m_3}$, otherwise the remaining arcs of $\setap$ cannot cover the remaining internal nodes. 
	
	Let $M'$ be the set of triples corresponding to the paths having the form of $P$ and covering all the nodes of $ X\cup Y \cup Z $. It is clear that $|M'|=q$, and the triples are mutually disjoint. 
	
	We finish the proof by showing that $M'\subseteq M$.
		We proceed by contradiction. Suppose that there exists a triple in $M'$ which is not in $M$, the corresponding path includes necessarily two internal nodes $a_{1}^{m1}$ and $b_{1}^{m2}$ not of the same triple (see Figure~\ref{Gmnonf}). Thus, the only way to cover $a_{2}^{m1}$ is a path containing $z$ of the triple $m1$. This path should not include $b_{2}^{m1}$; otherwise, $b_{1}^{m1}$ cannot be covered by any path containing an arc from $\setap$. Therefore, we have a path starting from a node of $X$ and containing six nodes rather than seven. This requires more than $2(|M|-q)$ feasible node-disjoint paths to cover all the remaining internal nodes, which contradicts the previous conclusions.
		We have, therefore, $M'\subseteq M$, and the theorem follows.
	\end{proof}
	
	Theorem~\ref{FPCnp-hard} implies that \texttt{MPC-ARC} is also $\mathcal{NP}$-hard in the strong sense for arbitrary DAGs. The restricted problem of the crew scheduling problem described in the introduction, which incorporates the rest block constraint, can be seen as a particular case of \texttt{MPC-ARC}. We refer to this problem as \texttt{CS-MPC-ARC}. In \texttt{CS-MPC-ARC}, the underlying DAG is transitive: if pairing $p_j$ can be operated after pairing $p_i$, and pairing $p_k$ can be operated after pairing $p_j$, then $p_k$ can also be operated after $p_i$. Additionally, the arcs in \(\setap\) follow a structured pattern: if an arc connecting pairings $p_i$ and $p_j$ is in \(\setap\), then all arcs connecting any pairing \(p_l\) that precedes (or coincides with) \(p_i\) and any pairing \(p_k\) that succeeds (or coincides with) \(p_j\) are also in \(\setap\). 
	
	Although the complexity of \texttt{CS-MPC-ARC} remains open, we show in the following theorem that \texttt{MPC-ARC} becomes tractable for transitive closures of paths---a specific case of \texttt{CS-MPC-ARC}. However, from the experiments conducted in Section~\ref{sec:compres} on \texttt{CS-MPC-ARC} instances provided by Air France, it seems that the problem is likely not tractable. If this observation is confirmed with a proof, it would imply that adding the rest block constraint to the crew scheduling problem renders it $\mathcal{NP}$-hard, thereby shedding light on the minimal components responsible for the computational complexity of crew scheduling problems.
	
	\begin{theorem}
		\texttt{MPC-ARC} can be solved in polynomial time on DAGs that are transitive closures of paths.
	\end{theorem}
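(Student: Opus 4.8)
The plan is to exploit the fact that a DAG which is the transitive closure of a path is simply a total order: after relabelling we may take $V=\{1,\dots,n\}$ and $A=\{(i,j):i<j\}$, so that \emph{every} strictly increasing sequence of nodes is a directed path. A feasible path cover is then exactly a partition of $V$ into increasing ``blocks'', where a block $v_1<v_2<\cdots<v_k$ is feasible iff some consecutive pair $(v_t,v_{t+1})$ lies in $\setap$. First I would record the elementary reductions that frame the problem: a single-node block is never feasible, and if $\setap$ contains any unit arc $(i,i+1)$ then the Hamiltonian path $1\to 2\to\cdots\to n$ is feasible, so the optimum is $1$. The substance is the case in which witnesses must be long arcs. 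I would recast everything in the Dilworth--Fulkerson style: selecting a set of consecutive ``links'' that form node-disjoint increasing chains, the number of chains equals $n$ minus the number of links, so \emph{minimising the number of feasible paths is equivalent to maximising the number of links subject to every chain containing at least one link of $\setap$}.

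The key step is to express the constraint ``each chain contains a special link'' so that the optimum is a minimum-cost flow. I would build a two-layer network. Each node $v$ gets a copy in a ``before'' layer (meaning no $\setap$ arc has yet been used on its chain) and a copy in an ``after'' layer, each equipped with a capacity-one internal coverage arc; a gadget forces exactly one of the two copies to carry the single unit covering $v$, which simultaneously enforces node-disjointness and exact coverage. A chain starts in the before layer, may advance within it through normal arcs only, switches to the after layer exclusively along an arc of $\setap$, and afterwards may use arbitrary arcs; crucially, a chain may reach the sink only from the after layer, so that only chains that have used a special arc are permitted to terminate. Because every pair $(i,j)$ with $i<j$ is an arc, the inter-layer arcs are just all increasing pairs, and the network has $O(n)$ nodes and $O(n^2)$ arcs. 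A unit of $s$--$t$ flow then models one feasible path (it has crossed into the after layer at least once), minimising the number of units that cover all nodes yields the minimum number of feasible paths, and infeasibility of \prob{} is detected precisely when the coverage constraints admit no feasible flow.

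The main obstacle is the correctness of this reduction rather than its running time. Two directions require care. First, the monotone before/after layering must genuinely certify feasibility: I must set up the inter-layer arcs so that the only $0\to 1$ transitions run along $\setap$ arcs, and then prove that every integral flow decomposes into node-disjoint increasing paths each of which really uses a special arc (no ``cheating'' into the after layer). Second, for the converse I must show that an optimal feasible cover induces a flow of equal value; the subtlety here is that a witness arc $(i,j)$ used as a consecutive pair excludes the intermediate nodes $i+1,\dots,j-1$ from its own chain, and the model must faithfully route those nodes onto other chains. This is automatic once coverage is exactly-once, but it is exactly the phenomenon that the crossing examples (e.g. $\setap=\{(1,3),(2,4)\}$ forcing the interleaved chains $\{1,3\},\{2,4\}$) show cannot be captured by a naive left-to-right interval dynamic program, which is why I route through a flow formulation rather than a DP. Establishing this bijection between integral minimum-cost flows and minimum feasible path covers, together with the exactly-once coverage gadget spanning the two layers, completes the polynomial-time argument.
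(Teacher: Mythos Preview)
Your flow idea is natural, but the crux—the ``gadget [that] forces exactly one of the two copies to carry the single unit covering $v$''—is where the argument breaks, and you never actually specify it. Any device that merges the two state-copies of $v$ through a shared capacity-$1$ node (the obvious way to impose $f(e_0^v)+f(e_1^v)=1$ inside a single-commodity network) lets a unit enter $v$ in the before-layer and leave in the after-layer without traversing an arc of $\setap$; this is precisely the ``cheating into the after layer'' you flag but do not eliminate. Concretely, take $n=2$ and $\setap=\emptyset$: \prob\ is infeasible, yet such a network admits the one-unit flow $s\to 1^0\to 2^0_{\text{in}}\to[\text{shared coverage node}]\to 2^1_{\text{out}}\to t$, so your feasibility test is unsound. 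If instead you keep the two internal arcs of $v$ disjoint and add the linear equation $f(e_0^v)+f(e_1^v)=1$ externally, this is no longer a flow-conservation row; such side constraints generally destroy total unimodularity, and you give no argument that integrality survives here. Either way, the reduction as stated does not establish correctness.

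For contrast, the paper's proof is a short case analysis that exploits the total order directly and yields something stronger than ``polynomial'': on the transitive closure of a path the optimum is always $1$, $2$, or the instance is infeasible. If some $(v_i,v_{i+1})\in\setap$ the Hamiltonian path works. Otherwise every witness is a long arc, so any feasible cover needs at least two paths (a path using $(v_i,v_{i+k})$ cannot contain $v_{i+1},\dots,v_{i+k-1}$); two paths suffice exactly when $\setap$ contains a pair of long arcs whose spans are disjoint or meet end-to-start, and an explicit $2$-path cover is written down; the remaining overlap patterns of any two arcs in $\setap$ are shown to leave some node uncoverable. This $O(|\setap|^2)$ check is both far simpler than a flow reduction and gives a structural characterisation that your approach, even if repaired, would not directly reveal.
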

	\begin{proof}
		Let $P$ be a path given by the node sequence $(v_1,\ldots,v_n)$, and $G=(V,A)$ be the transitive closure of $P$ defined by the set of vertices $V=\{v_1,\ldots,v_n\}$ and set of arcs $A=\{(v_i, v_j): 1 \leq i < j \leq n\}$. One can check that $G$ is a DAG. Let $\setap$ be a non-empty subset of $A$. If $\setap$ has at least one arc of the form $(v_i,v_{i+1})$ with $i\in\{1,\ldots,n-1\}$, the path $P$ is the solution of the \texttt{MPC-ARC} problem on $G$. 
		
		\begin{figure}
			\begin{center}
			
			\includegraphics[width=0.85\textwidth]{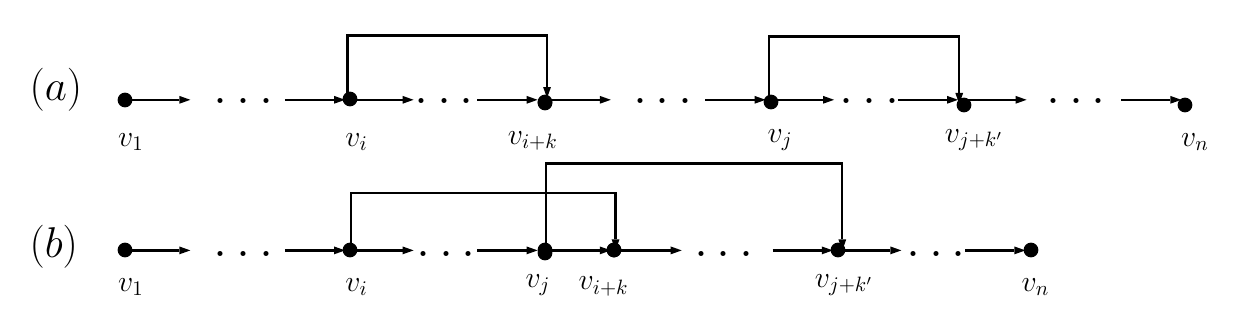}
				
			\end{center}
			\caption{Feasible cases}\label{polyfeasible}
		\end{figure}
		
		\begin{figure}
			\begin{center}
					\includegraphics[width=0.85\textwidth]{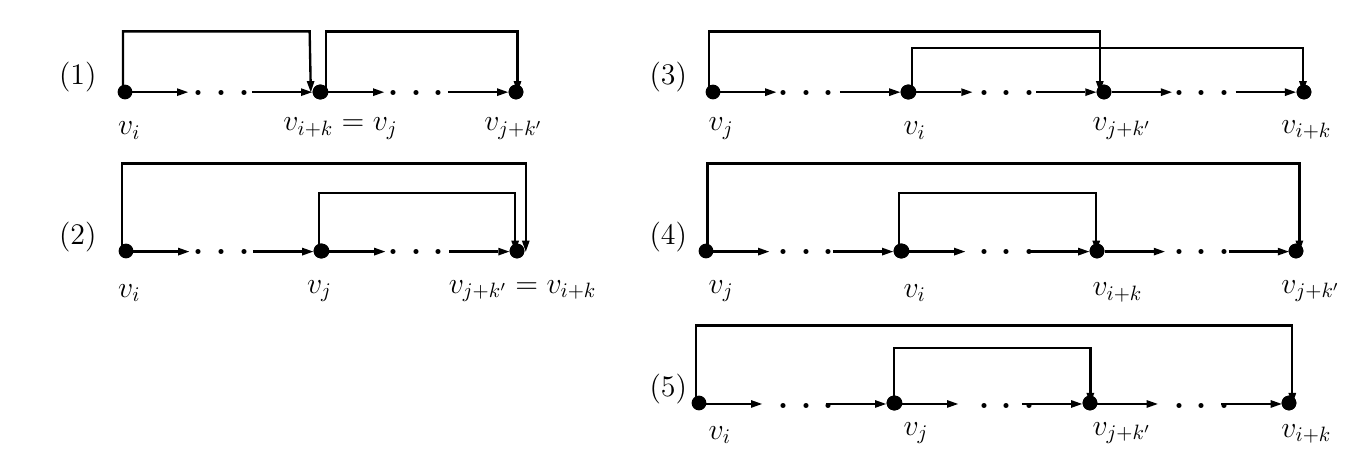}
				
			\end{center}
			\caption{Infeasible cases}\label{polynonfeasible}
		\end{figure}
		
		\sloppy We assume now that the arcs of $\setap$ are of the form $(v_i,v_{i+k})$ with $i=1,\dots,n-2$, $k >1$ and $i+k \leq n$. Observe that, in this case, any optimal solution has at least two paths since any path including an arc $(v_i,v_{i+k}) \in \setap$ cannot cover the nodes $v_{i+1},\ldots, v_{i+k-1}$. If there exist in $\setap$ two arcs of the form $(v_i,v_{i+k})$ and $(v_j,v_{j+k'})$ with $i+k<j$ (resp. with $i+k=j+1$), an optimal solution is formed by the two paths $(v_1,\ldots,v_i,v_{i+k},v_{j+1},\ldots,v_{j+k'-1})$ and $(v_{i+1},\ldots,v_{i+k-1},v_{i+k+1},\ldots,v_{j},v_{j+k'},\ldots,v_n)$ (resp. $(v_1,\ldots,v_i,v_{i+k},\ldots, v_{j+k'-1})$ and $(v_{i+1},\ldots,v_{j},v_{j+k'},\ldots,v_n)$), see Figure~\ref{polyfeasible}.$(a)$ (resp. $(b)$). 
		
		We finish the proof by showing that the remaining cases are not feasible. If $\setap$ contains only one arc $(v_i,v_{i+k})$, the nodes $v_{i+1}, \ldots, v_{i+k-1}$ cannot be covered by any feasible path. We consider now the cases with $|\setap|>2$. Suppose by contradiction that the problem admits a feasible solution $S$ and let $(v_i,v_{i+k})$ be an arc of $\setap$ that is in a path of $S$. To cover the nodes $v_{i+1},\ldots, v_{i+k-1}$, we need a feasible path passing through an arc $(v_j,v_{j+k'})$ of $\setap$. Figure~\ref{polynonfeasible} presents the remaining possible positions of any two arcs of $\setap$. If the two arcs share one endpoint (see cases (1) and (2) of Figure~\ref{polynonfeasible}), the resulting paths are not disjoint. If $\{i+1, i+k-1\}\cap \{j+1,j+k'-1\} \neq \emptyset$ (see cases (3), (4) and (5) of Figure~\ref{polynonfeasible}), there is at least one node in $\{v_{i+1},\ldots, v_{i+k-1}\}$ that cannot be covered by the path that uses $(v_j,v_{j+k'})$. This contradicts the fact that $S$ is feasible since it does not cover all the nodes of $G$.
		
	\end{proof}

	\section{Path-flow based mathematical formulations}\label{sec:mathforms}
	
	In this section, we describe two mathematical formulations for the \fprob\ problem: three-index path flow (\S\ref{sec:mf-tipf}) and two-index path flow (\S\ref{sec:mf-twoipf}) formulations. The formulations are inspired by classical formulations designed for routing problems, which have been proven effective in developing exact branch-and-cut solution approaches for vehicle routing problems (refer to, for example, \cite{letchford_projection_2006}).
	
	For the sake of the formulations, we introduce graph $\bar{G}=(\bar{V},\bar{A})$ derived from graph $G$ by adding dummy nodes $0$ and $\np=n+1$, i.e., $\bar{V}=\{0\} \cup V \cup \{\np\}$. The set of arcs $\bar{A}$ is defined as $\bar{A}=\{(0,i): i \in V\} \cup A \cup \{(i,\np): i \in V\}$. To each path $P=(v_1,v_2,\dots,v_h)$ in $G$ starting at node $v_1$, visiting nodes $\{v_1,v_2,\dots,v_h\} \subseteq V$, and ending at node $v_h$ is associated a path $\bar{P}=(0,v_1,v_2,\dots,v_h,\np)$ in $\bar{G}$ visiting the same set of nodes of $P$ and starting and ending at nodes 0 and $\np$, respectively.
	
	\sloppy Hereafter, the following notation is used. If $(i, j)$ represents an arc, it is \textit{outgoing} from node $i$ and \textit{incoming} to node $j$. The arc $(i, j)$ is incident to both nodes $i$ and $j$, where $i$ is the \textit{start} node, and $j$ is the \textit{end} node of the arc. 

	For any subset $S \subseteq \bar{V}$, the \textit{outgoing} arcs of set $S$ are denoted $\outgoing{S} = \left\{ (i, j) \in \bar{A} : i \in S, j \in \bar{V} \setminus S \right\}$ and similarly $\incoming{S} = \left\{ (i, j) \in \bar{A} : i \in \bar{V} \backslash S, j \in S \right\}$ for \textit{incoming} arcs. If $S = \{i\}$, $\outgoing{i}$ and $\incoming{i}$ are used for simplicity.
	
	\subsection{A three-index path flow formulation}\label{sec:mf-tipf}
	
	Let $c_a$, $a \in \bar{A}$, be a cost associated with arc $a$ defined as follows: 
	$c_a=1$, if $a=(0,i)$; otherwise, $c_a=-|V|$.
	We denote with $K$ the index set $K=\{1,\dots,m\}$, where $m \leq |\setap|$ is an upper bound on the number of paths of an optimal solution to the \fprob. Let $x^k_{a}$, $k \in K$, $a \in \bar{A}$, be a binary variable equal to one if arc $a$ is traversed by path $k$, 0 otherwise. 
	
	The \fprob\ can be formulated as follows:
	\begin{subequations}
		\begin{align}
			(\mFA)\qquad \min & \sum_{k \in K}\sum_{a \in \bar{A}}c_a x^k_a \label{F3:obj}\\
			\text{s.t.}\; & \sum_{a \in \outgoing{0}}x^k_a \leq 1, \quad \forall k \in K \label{F3:k-from-0}\\
			& \sum_{k \in K}\sum_{a \in \incoming{i}}x^k_a \leq 1, \quad \forall i \in V \label{F3:in-V}\\
			& \sum_{a \in \incoming{i}}x^k_a=\sum_{a \in \outgoing{i}}x^k_a, \quad \forall k \in K, i \in V \label{F3:degree-V}\\
			& \sum_{a \in \setap}x^k_a \geq \sum_{a \in \outgoing{0}}x^k_a, \quad \forall k \in K \label{F3:cons-pc}\\
			& x^k_a \in \{0,1\}, \quad \forall a \in \bar{A}.
		\end{align}
	\end{subequations}
	The objective function \eqref{F3:obj} states to maximize the number of covered nodes and to minimize the number of selected paths. Indeed, the cost of a path $\bar{P}=(0,v_1,v_2,\dots,v_h,\np)$ in $\bar{G}$ is equal to $1-h|V|$. The cost of a solution composed of a collection of $q \leq m$ paths $(\bar{P}_1,\bar{P}_2,\dots,\bar{P}_q$) visiting $p_1,p_2,\dots,p_q$ nodes from set $V$, respectively, is equal to $q-(\sum_{s=1}^q p_q)|V|$.
	Constraints \eqref{F3:k-from-0} ensure that each index $k$ is associated with at most one arc leaving node 0. Constraints \eqref{F3:in-V} state that, at most, one arc can enter a node in $V$. Constraints \eqref{F3:degree-V} impose degree constraints for nodes in $V$. Finally,  constraints \eqref{F3:cons-pc} impose that each path in solution contains at least one arc from set $\setap$.
	
	A trivial upper bound on the number of paths of any optimal \prob\ solution can be derived from formulation \mFA\ by first redefining the index set $K=\{1,\dots,|\setap|\}$, and then by solving the following linear program:
	\begin{subequations}
		\begin{align}
			\max & \sum_{k \in K}\sum_{a \in \delta^+(0)} x^k_a \label{F3m:obj}\\
			\text{s.t.}\; & \eqref{F3:k-from-0}-\eqref{F3:cons-pc}\\
			& x^k_a \geq 0, \quad \forall a \in \bar{A}.
		\end{align}
	\end{subequations}
	Let $z$ be the optimal solution cost of the above problem. Then, $m=\lfloor z \rfloor$ represents a valid upper bound.
	
	\subsection{A two-index path flow formulation}\label{sec:mf-twoipf}
	
	Given a path $\bar{P}=(v_0=0,v_1,v_2,\dots,v_h,v_{h+1}=\np)$ in $\bar{G}$, path $\bar{P}$ is 
	\textit{feasible} if it traverses at least one arc in $\setap$, i.e., there exists a pair of nodes $\{v_p,v_{p+1}\} \subseteq \{v_1,v_2,\dots,v_h\}$ such that $(v_p,v_{p+1}) \in \setap$; otherwise, the path is termed  \textit{infeasible}.
	Let $y_a$, $a \in \bar{A}$, be a binary variable equal to one if arc $a$ is used in the solution, zero otherwise. 
	
	The \fprob\ can be formulated as follows:
	\begin{subequations}
		\begin{align}
			(\mFB)\qquad \min & \sum_{a \in \bar{A}}c_a y_a \label{F3R:obj}\\
			\text{s.t.} &  \sum_{a \in \incoming{i}}y_a \leq 1, \quad \forall i \in V \label{F3R:in-V}\\
			& \sum_{a \in \incoming{i}}y_a=\sum_{a \in \outgoing{i}}y_a, \quad \forall i \in V \label{F3R:degree-V}\\
			& \sum_{i=0}^h y_{(v_i,v_{i+1})} \leq h, \quad \forall \text{ infeasible path } \bar{P} \label{F3R:cons-inf}\\
			& y_a \in \{0,1\}, \quad \forall a \in \bar{A}. \label{F3R:yvar}
		\end{align}
	\end{subequations}
	The objective function \eqref{F3R:obj} states to maximize the number of covered nodes and to minimize the number of selected paths. Constraints \eqref{F3R:in-V} are degree constraints imposing that each node $i \in V$ is covered at most once. Constraints \eqref{F3R:degree-V} are flow conservation constraints. The Infeasible Path Constraints (IPCs) \eqref{F3R:cons-inf} forbid infeasible paths to be part of the solution.
	
	\section{Classes of valid inequalities}\label{sec:validineq}
	
	Consider the polytope $\polyt=conv\{y \in \Real^{|\bar{A}|}: y \text{ satisfies } \eqref{F3R:in-V}-\eqref{F3R:yvar}\}$ defined as the convex hull of the characteristic vectors of all feasible paths on graph $\bar{G}$ associated with formulation \mFB. This section investigates classes of valid inequalities for $\polyt$.
	
	Firstly, the following proposition holds.
	\begin{proposition}[Trivial Constraint (TIC)]
		The following inequality is valid for $\polyt$.
		\begin{equation}\label{eq:TIC}        
			\sum_{a \in \setap}y_a \geq \sum_{a \in \outgoing{0}}y_a.
		\end{equation}
	\end{proposition}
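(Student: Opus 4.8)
The plan is to exploit the fact that $\polyt$ is, by definition, the convex hull of the integer points satisfying \eqref{F3R:in-V}--\eqref{F3R:yvar}; hence it suffices to verify the inequality \eqref{eq:TIC} at every integer feasible solution of \mFB, and validity then extends to all of $\polyt$ by taking convex combinations, since both sides of \eqref{eq:TIC} are linear in $y$.

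First I would fix an arbitrary binary vector $y$ satisfying \eqref{F3R:in-V}--\eqref{F3R:yvar} and argue that its support decomposes into a collection of node-disjoint directed paths from $0$ to $\np$ in $\bar{G}$. Indeed, the degree constraints \eqref{F3R:in-V} bound the in-degree of every node $i \in V$ by one, and the flow-conservation constraints \eqref{F3R:degree-V} force the out-degree to equal the in-degree at every such node; since node $0$ has only outgoing arcs and $\np$ only incoming arcs, and since $\bar{G}$ is acyclic, every covered node lies on exactly one directed path that begins at $0$ and ends at $\np$. Let $q$ denote the number of these paths. Each path uses exactly one arc of $\outgoing{0}$, namely its initial arc, so $\sum_{a \in \outgoing{0}} y_a = q$, which is precisely the right-hand side of \eqref{eq:TIC}.

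Next I would use the infeasible-path constraints \eqref{F3R:cons-inf} to show that each of the $q$ paths is feasible, i.e., traverses at least one arc of $\setap$. If some path $\bar{P}$ with $h$ internal nodes were infeasible, then $y$ would set all $h+1$ of its arcs to one, giving $\sum_{i=0}^h y_{(v_i,v_{i+1})} = h+1 > h$ and violating \eqref{F3R:cons-inf}. Consequently every one of the $q$ paths contributes at least one arc of $\setap$ to the support of $y$. Because the paths are node-disjoint they share no arcs, so the $\setap$-arcs contributed by distinct paths are pairwise distinct; hence $\sum_{a \in \setap} y_a \geq q = \sum_{a \in \outgoing{0}} y_a$, which establishes \eqref{eq:TIC} at $y$.

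The argument has no genuinely hard step---this is why the constraint is called trivial---but the one point requiring care is the reduction to integer points together with the structural claim that a feasible $y$ really is a union of paths from $0$ to $\np$. This is where the acyclicity of $\bar{G}$ and the conservation constraints \eqref{F3R:degree-V} are essential: without acyclicity the support could in principle contain a directed cycle disjoint from the $0$-to-$\np$ paths, and the identification of the right-hand side of \eqref{eq:TIC} with the number of paths would break down.
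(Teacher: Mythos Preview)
Your proof is correct and follows essentially the same approach as the paper, which gives a one-line justification (each selected path must contain at least one arc of $\setap$, and the number of selected paths is $\sum_{a\in\outgoing{0}}y_a$). You have simply spelled out in full the decomposition into node-disjoint $0$--$\np$ paths and the appeal to the infeasible-path constraints~\eqref{F3R:cons-inf} that the paper leaves implicit.
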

	Its validity follows directly from the fact that the number of arcs in the solution belonging to set $\setap$ must be greater than or equal to the number of selected paths in the solution. 
	
	Below, we investigate additional classes of valid inequalities based on valid inequalities designed for related problems to the \fprob.
	
	\subsection{Tournament constraints}
	
	In their work~\cite{ascheuer2000}, the authors proposed various valid inequalities for the Asymmetric Traveling Salesman Problem with Time Windows (\texttt{ATSPTW}), particularly \textit{infeasible path-elimination constraints}. They introduced the basic form of constraints \eqref{F3R:cons-inf} used by formulation \mFB, along with several possibilities for enhancing these inequalities. Among these enhancements are the \textit{tournament constraints}, which provide valid inequalities for $\polyt$.
	
	The tournament constraints strengthen the IPCs ~\eqref{F3R:cons-inf} as follows. By abuse of notation, we assume that variables $y_{(v_i,v_j)}$ are indexed over all the pairs $v_i, v_j \in \bar{V}$, $v_i \neq v_j$, but they appear in the corresponding expressions only for node pairs $(v_i,v_j) \in \bar{A}$.
	
	Consider an infeasible path $\bar{P} = (v_0=0, v_1, v_2, \ldots, v_h, v_{h+1}=\np)$ in $\bar{G}$. 
	\begin{proposition}[Tournament Constraints I (TC-I)]
		The following inequality is valid for $\polyt$.
		\begin{equation}\label{tournament} \sum_{i=0}^{h}\sum_{j=i+1}^{h+1} y_{(v_i,v_j)} \leq h. \end{equation} 
	\end{proposition}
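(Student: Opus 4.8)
The plan is to check validity on the integral vertices of $\polyt$---the characteristic vectors of node-disjoint collections of feasible paths in $\bar{G}$---since \eqref{tournament} is linear and validity on the vertices extends to the convex hull. Fix such a solution and let $W=\{v_0,v_1,\dots,v_{h+1}\}$ be the node set of the infeasible path $\bar{P}$. Because $\bar{P}$ is a directed path in the DAG $\bar{G}$, the listing $v_0,v_1,\dots,v_{h+1}$ is a topological order on $W$, so every arc of $\bar{G}$ with both endpoints in $W$ points forward, i.e.\ has the form $(v_i,v_j)$ with $i<j$. Hence the left-hand side of \eqref{tournament} counts exactly the arcs used by the solution that have both endpoints in $W$, and the goal reduces to showing that any feasible solution uses at most $h$ of them.

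First I would bound the contribution of a single path. Let $Q$ be one of the solution paths (from $0$ to $\np$), let $m_Q$ be the number of internal nodes $v_1,\dots,v_h$ that $Q$ visits, and let $a_Q$ be the number of arcs of $Q$ lying inside $W$. The nodes of $W$ met by $Q$ are $v_0$, the $m_Q$ internal ones, and $v_{h+1}$, a total of $m_Q+2$ nodes, and the in-$W$ arcs of $Q$ join consecutive such nodes; each excursion of $Q$ outside $W$ destroys one in-$W$ adjacency, so $a_Q\le m_Q+1$, with equality precisely when $Q$ never leaves $W$. Summing over the paths and using that each internal node is covered at most once (constraints \eqref{F3R:in-V}) gives $\sum_Q m_Q\le h$, whence $\sum_Q a_Q\le h+p_W$, where $p_W$ is the number of solution paths confined to $W$. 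The argument therefore reduces to proving $p_W=0$. Note that a confined path visiting \emph{every} node of $W$ would be forced, by forwardness, to be the traversal $v_0\to v_1\to\cdots\to v_{h+1}=\bar{P}$, which is infeasible; so the remaining content is to exclude the confined paths that visit only some of the $v_i$.

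The step I expect to be the main obstacle is exactly $p_W=0$, and it is delicate because of the dummy terminals. Since every path begins at $0=v_0$ and ends at $\np=v_{h+1}$, these two nodes lie on every path, so---unlike in the single-tour setting in which this inequality originates---the in-$W$ arcs need not decompose into vertex-disjoint simple paths: node $0$ may have several outgoing in-$W$ arcs and $\np$ several incoming ones. Worse, infeasibility of $\bar{P}$ only asserts that no arc $(v_p,v_{p+1})$ of $\bar{P}$ lies in $\setap$; a shortcut arc $(v_i,v_j)$ with $j>i+1$ may still belong to $\setap$, in which case a path such as $0\to v_i\to v_j\to\np$ confined to $W$ is feasible and $p_W>0$. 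To close the proof one must rule out such confined feasible paths, i.e.\ establish that no forward arc among $v_1,\dots,v_h$ lies in $\setap$ (an inheritance of infeasibility along tournament arcs). This holds, for instance, under the monotone structure of $\setap$ in \texttt{CS-MPC-ARC}, and in that regime one obtains the sharper $a_Q\le m_Q$ for every path, so that $\sum_Q a_Q\le\sum_Q m_Q\le h$ follows at once. Pinning down exactly this inheritance property is, I expect, the crux of the argument.
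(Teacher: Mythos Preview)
Your analysis correctly isolates the obstruction, and the difficulty you flag is not a technicality to be patched but is fatal to the statement as written. The step $p_W=0$ cannot be established in general. Take $h=4$, an infeasible path $\bar P=(0,v_1,v_2,v_3,v_4,\np)$ with $(v_1,v_3),(v_2,v_4)\in\setap$ and none of the consecutive arcs $(v_k,v_{k+1})$ in $\setap$; assume all forward arcs among $v_1,\dots,v_4$ lie in $A$. Then the two node-disjoint feasible paths $Q_1=(0,v_1,v_3,\np)$ and $Q_2=(0,v_2,v_4,\np)$ define an integer point of $\polyt$: constraints \eqref{F3R:in-V}--\eqref{F3R:degree-V} hold, and every IPC \eqref{F3R:cons-inf} holds because the only $0$--$\np$ paths supported by the solution are $Q_1$ and $Q_2$, both feasible. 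Yet the left-hand side of \eqref{tournament} equals $6>4=h$. Your bound $\sum_Q a_Q\le h+p_W$ is tight here with $p_W=2$, and no ``inheritance'' property of $\setap$ can rescue the inequality without extra hypotheses; the scenario you worried about is an actual counterexample.

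The paper's argument is much shorter than yours but rests on the premise that $\sum_{a\in\delta^+(v_i)}y_a\le 1$ for every $i=0,\dots,h$: if the double sum exceeds $h$, each of the $h+1$ row sums equals $1$, which in a single-route model forces $y_{(v_i,v_{i+1})}=1$ for all $i$ and places $\bar P$ in the solution. That premise is exactly what fails at $i=0$, since the dummy source $v_0=0$ has out-degree equal to the number of selected paths and is not subject to \eqref{F3R:in-V}--\eqref{F3R:degree-V}. Your per-path decomposition is the better diagnostic---it pinpoints where the single-tour ATSPTW argument breaks in the multi-path polytope---whereas the paper's proof silently imports an out-degree bound that does not hold here. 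In short, the gap you identified is the same gap the paper overlooks, and it cannot be closed for $\polyt$ as defined.
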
 
	\begin{proof}
		We observe that since $\delta^+(v_i)\leq 1$, $i=0,\dots,h$, in any optimal \fprob\ solution, we have $\sum_{j=i+1}^{h+1} y_{(v_i,v_j)}\leq 1$, $i=0,\dots,h$. Hence, if $\sum_{i=0}^{h}\sum_{j=i+1}^{h+1} y_{(v_i,v_j)} > h$, then $\sum_{j=i+1}^{h+1} y_{(v_i,v_j)}=1$, $i=0,\dots,h$ and we must have $y_{(v_i,v_{i+1})}=1$, $i=0,\dots,h$, which is a contradiction due to infeasibility of path $\bar{P}$.
	\end{proof}
	
	The authors introduced in~\cite{ascheuer2000} several lifting procedures. Most of these procedures rely on the existence of bidirectional paths between nodes, a condition not met in the context of the \fprob\ problem due to its DAG. However, the $V$-lifting procedure proposed in \cite{ascheuer2000} can be adapted to the \fprob\ problem as follows.
	
	\begin{proposition}[Tournament Constraints II (TC-II)]
		Consider an infeasible path $\bar{P} = (v_0=0, v_1, v_2, \ldots, v_h, v_{h+1}=\np)$ and a node $v_k \notin \{0,v_1,\ldots,v_h,\np\}$ such that path $\bar{P}'=(0,v_1,\ldots,v_l,v_k,v_{l+1},\ldots,v_h,\np)$, with $v_{l} \neq 0$ and $v_{l+1} \neq \np$, is infeasible for a certain $v_l$. Then
		\begin{equation}\label{vlifting}
			\bigg ( \sum_{i=0}^{h}\sum_{j=i+1}^{h+1} y_{(v_i,v_j)} \bigg )+ y_{(v_l,v_k)}+y_{(v_k,v_{l+1})}+y_{(v_l,v_{l+1})} \leq h+1
		\end{equation}
		is valid for $\polyt$. 
	\end{proposition}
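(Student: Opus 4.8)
The plan is to verify the inequality at every integer point of $\polyt$; since $\polyt$ is the convex hull of such points, this establishes validity in general. An integer point is a collection of node-disjoint $0$--$\np$ feasible paths, so at every node of $V$ the constraints \eqref{F3R:in-V}--\eqref{F3R:degree-V} force both in-degree and out-degree to be at most one. Write $T := \sum_{i=0}^{h}\sum_{j=i+1}^{h+1} y_{(v_i,v_j)}$ for the left-hand side of the TC-I inequality \eqref{tournament} associated with $\bar{P}$; since $\bar{P}$ is infeasible, \eqref{tournament} gives $T \le h$. Because the hypotheses require $v_l \neq 0$ and $v_{l+1} \neq \np$, both $v_l$ and $v_{l+1}$ lie in $V$ and are thus subject to the degree constraints. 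I would then split on the integer value of $y_{(v_l,v_{l+1})}$.

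If $y_{(v_l,v_{l+1})} = 1$, the arc $(v_l,v_{l+1})$ already saturates the single outgoing unit of $v_l$ and the single incoming unit of $v_{l+1}$; hence $y_{(v_l,v_k)} = 0$ (it would be a second arc out of $v_l$) and $y_{(v_k,v_{l+1})} = 0$ (a second arc into $v_{l+1}$). The left-hand side of \eqref{vlifting} then equals $T + 1 \le h + 1$, as desired.

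If instead $y_{(v_l,v_{l+1})} = 0$, I would invoke TC-I on the augmented path $\bar{P}' = (0,v_1,\ldots,v_l,v_k,v_{l+1},\ldots,v_h,\np)$, which is infeasible by hypothesis and has $h+1$ internal nodes. Since inserting $v_k$ between $v_l$ and $v_{l+1}$ preserves the relative order of all original nodes, the tournament sum for $\bar{P}'$ is exactly $T + \sum_{i=0}^{l} y_{(v_i,v_k)} + \sum_{j=l+1}^{h+1} y_{(v_k,v_j)}$, and \eqref{tournament} applied to $\bar{P}'$ bounds this quantity by $h+1$. The two lifted terms $y_{(v_l,v_k)}$ and $y_{(v_k,v_{l+1})}$ occur among these $v_k$-incident forward arcs (at $i=l$ and $j=l+1$), all of which are nonnegative; dropping the remaining ones and using $y_{(v_l,v_{l+1})}=0$ shows that the left-hand side of \eqref{vlifting} is at most $T + \sum_{i=0}^{l} y_{(v_i,v_k)} + \sum_{j=l+1}^{h+1} y_{(v_k,v_j)} \le h+1$.

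The delicate point is the bookkeeping in the second case: one must check that the tournament over $\bar{P}'$ reproduces $T$ verbatim and contributes precisely the $v_k$-incident forward arcs, which hinges on the order-preserving nature of the insertion. It is worth stressing that the degree inequalities alone give only the weaker bound $h+2$ (summing the out-constraint at $v_l$ and the in-constraint at $v_{l+1}$ yields $y_{(v_l,v_k)} + 2y_{(v_l,v_{l+1})} + y_{(v_k,v_{l+1})} \le 2$); the extra unit is shaved off exactly by the case analysis, which routes each case through an infeasible-path tournament constraint, TC-I for $\bar{P}$ in one case and for $\bar{P}'$ in the other. Combining the two cases gives \eqref{vlifting} at every integer point, hence over all of $\polyt$.
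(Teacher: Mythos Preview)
Your proof is correct, and it takes a genuinely different route from the paper. The paper derives \eqref{vlifting} as a rank-one Chv\'atal--Gomory cut: it takes the half-sum of the TC-I inequality for $\bar{P}$, the TC-I inequality for $\bar{P}'$, and the degree-based inequality $y_{(v_l,v_k)} + y_{(v_k,v_{l+1})} + 2y_{(v_l,v_{l+1})} \le 2$, obtaining a right-hand side of $h+\tfrac{3}{2}$; rounding down the half-integral coefficients on the left (which drops the extra $v_k$-incident terms) and then the right-hand side yields $h+1$. Your argument instead proves validity directly at integer points by a clean case split on $y_{(v_l,v_{l+1})}$, routing each case through the relevant TC-I constraint (for $\bar{P}$ when the arc is present, for $\bar{P}'$ when it is absent). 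Your approach is more elementary and makes the combinatorial reason for the bound transparent; the paper's CG derivation, on the other hand, situates the cut within a standard polyhedral machinery and shows explicitly which base inequalities generate it, which can be useful if one wants to iterate the lifting or analyse cut rank.
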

	\begin{proof}
		The inequality is derived by applying the $V$-lifting procedure described in ~\cite{ascheuer2000} on the infeasible paths $\bar{P}$ and $\bar{P}'$. More specifically, given nodes $v_l$, $v_k$ and $v_{l+1}$ such that $v_{l} \neq 0$ and $v_{l+1} \neq \np$, we first observe that the following inequality is satisfied by any optimal \fprob\ solution:
		\begin{equation}\label{tcii-a}
			y_{(v_l,v_k)} + y_{(v_k,v_{l+1})} + 2y_{(v_l,v_{l+1})} \leq 2.
		\end{equation}
		Indeed, if node $v_l$ is visited in a solution, at most once of the paths $(v_l,v_k,v_{l+1})$ and $(v_l,v_{l+1})$ can be part of the solution.
		We derive a new inequality by a linear combination of the TC-I inequality \eqref{tournament} defined on path $\bar{P}$, the TC-I inequality \eqref{tournament} defined on path $\bar{P}'$ and inequality \eqref{tcii-a} weighting the inequalities by scalar $\frac{1}{2}$.
		We obtain:
		\begin{equation}\label{tcii-b}
			\begin{split}
				\bigg (\sum_{i=0}^h\sum_{j=i+1}^{h+1} y_{(v_i,v_j)}\bigg ) + \bigg (\frac{1}{2}\sum_{i=0}^{l-1}y_{(v_i,v_k)} \bigg ) + y_{(v_l,v_k)}+y_{(v_k,v_{l+1})}+y_{(v_l,v_{l+1})}+ \\ \bigg (\frac{1}{2}\sum_{i=l+2}^{h+1}y_{(v_k,v_i)}\bigg ) \leq h + \frac{3}{2}.
			\end{split}
		\end{equation}
		The left-hand side is integer-valued by rounding down the terms at the left-hand side with coefficients $\frac{1}{2}$. Hence, the right-hand-side coefficient $h + \frac{3}{2}$ can be rounded down to the integer coefficient $h + \lfloor \frac{3}{2} \rfloor = h+1$, thus deriving inequality \eqref{vlifting}.
	\end{proof}
	
	\subsection{$\setap$-reachability constraints}
	For each node $i$ in $\overline{V}$, we define the \textit{$\setap$-reaching arc set} $\setap^{-}_{i}$ as a subset of $\setap$ such that there exists a path from each arc of $\setap^{-}_{i}$ to $i$. Similarly, we define the \textit{$\setap$-reachable arc set} $\setap^{+}_{i}$ as a subset of $\setap$ such that there exists a path from $i$ to each arc of $\setap^{+}_{i}$.
	
	\begin{proposition}
		Let $(i,j) \in \bar{A} \setminus \setap$, the following inequality is valid for $\polyt$.
		\begin{equation}\label{eq:ARC-old}
			\sum_{a\in \setap^{-}_{i}} y_a+  \sum_{a\in \setap^{+}_{j}} y_a \geq y_{(i,j)}.
		\end{equation} 
	\end{proposition}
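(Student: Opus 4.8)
The plan is to prove the validity of inequality~\eqref{eq:ARC-old} by a case analysis on the value of $y_{(i,j)}$ in a feasible solution, which suffices since the inequality is trivially satisfied whenever the right-hand side is zero. First I would observe that the characteristic vectors spanning $\polyt$ correspond to collections of \emph{feasible} node-disjoint paths in $\bar{G}$, so every path appearing in the support of $y$ contains at least one arc from $\setap$. The only nontrivial case is $y_{(i,j)}=1$, meaning that the arc $(i,j) \in \bar{A} \setminus \setap$ is traversed by some path $\bar{P}$ in the solution.

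The key step is then to locate the feasibility-certifying arc of $\setap$ on this path and show it must lie in $\setap^{-}_{i} \cup \setap^{+}_{j}$. Since $(i,j) \notin \setap$, the arc $(i,j)$ itself does not certify feasibility of $\bar{P}$; by feasibility there exists an arc $a=(v_p,v_{p+1}) \in \setap$ traversed by $\bar{P}$. Because $\bar{P}$ is a single path through the DAG, the arc $a$ occurs either strictly before $(i,j)$ along $\bar{P}$ or strictly after it. In the former situation there is a subpath of $\bar{P}$ from (the head of) $a$ to $i$, so by definition $a \in \setap^{-}_{i}$ and hence $y_a=1$ contributes to the first sum. In the latter situation there is a subpath of $\bar{P}$ from $j$ to (the tail of) $a$, so $a \in \setap^{+}_{j}$ and $y_a=1$ contributes to the second sum. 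Either way the left-hand side is at least $1 = y_{(i,j)}$, establishing the inequality.

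The main obstacle I anticipate is purely definitional rather than combinatorial: the sets $\setap^{-}_{i}$ and $\setap^{+}_{j}$ are defined via the \emph{existence of a path in $G$} from an arc to $i$ (respectively from $j$ to an arc), and I must verify that the concrete subpath of $\bar{P}$ witnessing the placement of $a$ actually qualifies under this reachability definition. This requires care at two points: handling the dummy nodes $0$ and $\np$ (the certifying arc $a$ lies strictly inside $V$, so its endpoints are genuine nodes of $G$ and the relevant subpaths avoid $0$ and $\np$), and confirming that the reachability used in defining $\setap^{\pm}$ refers to existence of \emph{some} directed path rather than the specific path $\bar{P}$ — which is exactly what makes membership follow from the presence of a witnessing subpath. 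Once these definitional alignments are checked, no further computation is needed, and the proposition follows.
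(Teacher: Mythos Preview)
Your proposal is correct and follows essentially the same approach as the paper's proof: both argue that any feasible path traversing $(i,j)$ must contain an arc of $\setap$ lying either before $i$ or after $j$, hence in $\setap^{-}_{i} \cup \setap^{+}_{j}$. Your version is simply more explicit about the case analysis and the definitional checks, while the paper compresses the argument into two sentences.
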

	\begin{proof}
		By definition of a feasible path and the sets $ \setap^{-}_{i}$ and $ \setap^{+}_{j} $, any path covering $(i,j)$ must include at least one arc from $\setap^{-}_{i} \cup \setap^{+}_{j}$. Therefore, if $(i,j)$ is selected, at least one arc from $\setap^{-}_{i} \cup \setap^{+}_{j}$ must also be selected.
	\end{proof}
	
	Inequalities \eqref{eq:ARC-old} can be strengthened, as stated by the following proposition. To simplify notations, we introduce the auxiliary variables $z_i=\sum_{a \in \delta^-(i)}y_a$, $\forall i \in V$, which equals $1$ if node $i$ is covered and $0$ otherwise.
	
	\begin{proposition}[$\setap$-Reachability Constraints (A-RC)]
		Let $i\in V$, the following inequality is valid for $\polyt$.
		\begin{equation}\label{eq:ARC}
			\sum_{a\in \setap^{-}_{i}} y_a+  \sum_{a\in \setap^{+}_{i}} y_a \geq z_i.
		\end{equation} 
	\end{proposition}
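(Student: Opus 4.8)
The plan is to exploit the definition of $\polyt$ as a convex hull: a linear inequality holds on $\polyt$ as soon as it holds at every integral point, so it suffices to verify \eqref{eq:ARC} for the incidence vector $y$ (and the induced $z$) of an arbitrary feasible \mFB\ solution, that is, a collection of node-disjoint feasible paths in $\bar G$. I would split the verification according to whether node $i$ is covered.

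If $i$ is not covered then $z_i=\sum_{a\in\delta^-(i)}y_a=0$, while the left-hand side of \eqref{eq:ARC} is a sum of nonnegative variables, so the inequality holds trivially. The substantive case is $z_i=1$. Here the degree constraints \eqref{F3R:in-V} guarantee that $i$ lies on exactly one path of the solution, say $\bar P=(v_0=0,v_1,\ldots,v_h,v_{h+1}=\np)$ with $i=v_p$ and $1\le p\le h$.

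The heart of the argument is that feasibility of $\bar P$ places a $\setap$-arc on the path, necessarily on one side of $i$. Since $\bar P$ is feasible it uses some arc $a^*=(v_q,v_{q+1})\in\setap$. Exactly one of $p\ge q+1$ or $p\le q$ holds. In the former case the subpath $(v_q,v_{q+1},\ldots,v_p)$ is a path from $a^*$ to $i$ (trivial when $p=q+1$, i.e.\ $i$ is the head of $a^*$), so $a^*\in\setap^-_i$; in the latter case $(v_p,\ldots,v_q,v_{q+1})$ is a path from $i$ through $a^*$ (trivial when $p=q$, i.e.\ $i$ is the tail of $a^*$), so $a^*\in\setap^+_i$. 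In both cases $a^*$ is a selected arc contained in $\setap^-_i\cup\setap^+_i$, whence the left-hand side is at least $y_{a^*}=1=z_i$.

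I expect the only delicate point to be definitional bookkeeping rather than a genuine obstacle. One must read $\setap^-_i$ (resp.\ $\setap^+_i$) as the set of \emph{all} arcs of $\setap$ from which $i$ is reachable (resp.\ which are reachable from $i$), under a reflexive reachability convention covering the boundary situations where $i$ is itself an endpoint of $a^*$; this is exactly what guarantees the chosen $a^*$ falls into one of the two sums. Because $G$ is a DAG the two sets are disjoint, so no double counting occurs, and in any event overlap would only enlarge the left-hand side. Finally, I would remark that \eqref{eq:ARC} can alternatively be deduced from the already-established \eqref{eq:ARC-old}: for any arc $(j,i)\in\bar A\setminus\setap$ one has $\setap^-_j\subseteq\setap^-_i$, so \eqref{eq:ARC-old} yields $\sum_{a\in\setap^-_i}y_a+\sum_{a\in\setap^+_i}y_a\ge y_{(j,i)}$, while a selected incoming arc that lies in $\setap$ already belongs to $\setap^-_i$; since at most one incoming arc of $i$ is selected and $z_i\le 1$, combining these observations reproduces \eqref{eq:ARC}.
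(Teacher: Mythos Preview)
Your proof is correct and follows the same approach as the paper: both argue that any feasible path covering $i$ must contain an arc of $\setap$ that lies either before or after $i$ along the path, hence in $\setap^-_i\cup\setap^+_i$. Your write-up is simply a more explicit unpacking of the paper's two-sentence proof (you spell out the convex-hull reduction, the $z_i=0$ case, and the endpoint edge cases), and the closing remark deriving \eqref{eq:ARC} from \eqref{eq:ARC-old} is a valid alternative that the paper does not mention.
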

	\begin{proof}
		By definition of a feasible path and the sets $ \setap^{-}_{i}$ and $ \setap^{+}_{i} $, any path covering $i$ must include at least one arc from $\setap^{-}_{i} \cup \setap^{+}_{i}$. Therefore, if $i$ is selected, at least one arc from $\setap^{-}_{i} \cup \setap^{+}_{i}$ must also be selected.
	\end{proof}
	
	Two nodes $i$ and $j$ are \textit{conflicting} if and only if there does not exist a path in $\bar{G}$ that passes through the two nodes. If nodes $i$ and $j$ are conflicting nodes and part of a solution, the two nodes are covered by node-disjoint paths. Let $T \subseteq V$ be a set of conflicting nodes, and $\setap^{{-}}_{T}=\bigcup\limits_{i \in T} \setap^{-}_{i}$ and $\setap^{{+}}_{T}=\bigcup\limits_{i \in T}\setap^{+}_{i}$ be the sets of all $\setap$-reaching and $\setap$-reachable arcs of the nodes in $T$, respectively.
	
	\begin{proposition}[Generalized $\setap$-Reachability Constraints (A-GRC)]
		The following inequality is valid for $\polyt$.
		\begin{equation}\label{eq:A-GRC}
			\sum_{a\in \setap^{-}_{T}} y_a + \sum_{a\in \setap^{+}_{T}} y_a \geq \sum_{i\in T}  z_i.
		\end{equation}
	\end{proposition}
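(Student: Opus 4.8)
The plan is to establish validity of \eqref{eq:A-GRC} over the integer points of $\polyt$: since $\polyt$ is the convex hull of characteristic vectors of feasible solutions, it suffices to verify the inequality for every $y$ that is the incidence vector of a feasible collection of node-disjoint feasible paths. One preliminary remark guides the whole argument. One cannot simply sum the single-node inequalities \eqref{eq:ARC} over $i\in T$, because $\sum_{a\in \setap^{-}_{T}} y_a \le \sum_{i\in T}\sum_{a\in \setap^{-}_{i}} y_a$ (the union $\setap^{-}_{T}$ counts each arc once, whereas the double sum may count it several times), and likewise for the $\setap^{+}$ terms. Thus the A-GRC left-hand side is \emph{no larger} than the sum of the A-RC left-hand sides, while both inequalities share the same right-hand side $\sum_{i\in T}z_i$. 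Hence A-GRC is strictly stronger, and the conflicting structure of $T$ must genuinely be exploited.

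First I would fix a feasible integer solution and let $T_0\subseteq T$ be the set of nodes of $T$ that are covered, so that $\sum_{i\in T} z_i = |T_0|$. For each $i\in T_0$, let $P_i$ be the unique path of the solution covering $i$. Because $P_i$ is feasible, it traverses at least one arc of $\setap$; fix one such arc $a_i=(v_p,v_{p+1})$. Exactly as in the proof of \eqref{eq:ARC}, the arc $a_i$ either reaches $i$ along $P_i$ or is reached from $i$ along $P_i$, so $a_i\in \setap^{-}_{i}\cup \setap^{+}_{i}\subseteq \setap^{-}_{T}\cup \setap^{+}_{T}$, and of course $y_{a_i}=1$.

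The key step is to show that the witness arcs $\{a_i : i\in T_0\}$ are pairwise distinct. Since the nodes of $T$ are conflicting, the paths $\{P_i : i\in T_0\}$ are node-disjoint on $V$ (they may share only the dummy nodes $0$ and $\np$). As each $a_i\in \setap\subseteq A$ joins two nodes of $V$, a coincidence $a_i=a_j$ with $i\neq j$ would force $P_i$ and $P_j$ to pass through those two nodes of $V$, contradicting node-disjointness. Finally I would bound the left-hand side: each selected arc contributes $y_a$ to the first sum when it lies in $\setap^{-}_{T}$ and $y_a$ to the second sum when it lies in $\setap^{+}_{T}$, so every witness arc $a_i$, lying in $\setap^{-}_{T}\cup \setap^{+}_{T}$ with $y_{a_i}=1$, contributes at least $1$ to the combined left-hand side; being distinct, the $a_i$ jointly contribute at least $|T_0|$, and any further selected arcs of $\setap^{-}_{T}\cup \setap^{+}_{T}$ only add more. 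This yields $\sum_{a\in \setap^{-}_{T}} y_a + \sum_{a\in \setap^{+}_{T}} y_a \ge |T_0| = \sum_{i\in T} z_i$. The main obstacle is precisely the distinctness of the witness arcs; everything else is bookkeeping, and it is the conflicting hypothesis on $T$ that legitimizes the tighter union-based left-hand side.
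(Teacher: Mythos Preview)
Your argument is correct and is essentially an explicit unpacking of the paper's one-sentence proof: the paper simply observes that the covered nodes of $T$ lie on pairwise node-disjoint feasible paths, so the number of selected $\setap$-arcs reaching or reachable from $T$ is at least the number of covered nodes. Your version spells out the witness-arc construction and the distinctness step carefully (and your preliminary remark that summing the A-RC inequalities does not suffice is a nice clarification of why the conflicting hypothesis is needed), but the underlying idea is identical.
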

	\begin{proof}
		Since the nodes in $T$ can be covered only by node-disjoint paths, the number of selected arcs from $\setap^{-}_{T}$ and $\setap^{+}_{T}$ must be at least equal to the number of covered nodes from $T$.
	\end{proof}
	
	\subsection{Reachability cuts}
	The simple $(\pi,\sigma)$-inequality, introduced in \cite{balas1995}, defines constraints for ordered pairs of nodes $i$ and $j$ in the precedence-constrained asymmetric traveling salesman problem. These constraints specify that a path from $i$ to $j$ can only traverse a specific subset of arcs determined by precedence constraints. In \cite{ascheuer2000}, the authors proposed an enhancement of the $(\pi,\sigma)$-inequality tailored for the \texttt{ATSPTW}, where time-window restrictions further constrain the feasible arcs.
	Building upon this idea, in \cite{lysgaard2006}, the authors extended the concept to the vehicle routing problem with time windows (\texttt{VRPTW}) to derive a new class of valid inequalities called reachability cuts or R-cuts. Unlike \texttt{ATSPTW}, \texttt{VRPTW} does not necessarily require visiting all nodes on a single route. The authors in \cite{lysgaard2006} recognized that since all paths pass through the depot, it is feasible to mandate paths in both directions between the depot and any other node. This insight led to the adaptation of these cuts to the \fprob\ problem, as described in the following.
	
	For each $i \in V$, let $A^{-}_{i}$ (resp. $A^{+}_{i}$) be the set of arcs of all feasible paths of the form $(0,\ldots,i)$ (resp. $(i,\ldots,\np)$). We refer to these sets as the \textit{reaching arc set} for $A^{-}_{i}$ and the \textit{reachable arc set} for $A^{+}_{i}$. For any subset $T \subseteq V$, we define $A^{-}_{T}=\bigcup_{i \in T} A^{-}_{i}$, similarly $A^{+}_{T}=\bigcup_{i \in T} A^{+}_{i}$. 
	
	\begin{proposition}[Reaching Constraints (RC$\pm$)]\label{propRC}
		For any set $S \subseteq V$ and any subset $T \subseteq S$ of conflicting nodes, the following inequality is valid for $\polyt$.
		\begin{equation}\label{rcuts}
			\sum_{a \in \incoming{S} \cap A^{-}_{T}} y_a +  \sum_{a \in \outgoing{S} \cap A^{+}_{T}} y_a \geq \sum_{i\in T}  z_i.
		\end{equation}
	\end{proposition}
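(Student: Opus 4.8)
The plan is to prove validity on the integral points of $\polyt$; since $\polyt$ is defined as a convex hull, an inequality that holds at every feasible integral solution holds on the whole polytope. So I would fix a feasible integral $y$, set $T' = \{i \in T : z_i = 1\}$ to be the covered nodes of $T$ (so that $\sum_{i\in T} z_i = |T'|$), and aim to exhibit, for each $i \in T'$, a \emph{distinct} selected arc lying in $(\incoming{S} \cap A^{-}_{T}) \cup (\outgoing{S} \cap A^{+}_{T})$. Summing these then forces the left-hand side to be at least $|T'|$.

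The heart of the argument is a local analysis at each covered node $i \in T'$. Because the nodes of $T$ are pairwise conflicting, the covered nodes of $T$ lie on pairwise node-disjoint (hence arc-disjoint) feasible paths; let $\bar{P}_i$ be the path covering $i$. I would split $\bar{P}_i$ at $i$ into an initial segment $(0,\ldots,i)$ and a terminal segment $(i,\ldots,\np)$. Since $\bar{P}_i$ is feasible it contains an arc of $\setap$, and this arc lies entirely in one of the two segments (splitting at the node $i$ assigns each arc unambiguously to the before- or after-segment). If the $\setap$-arc lies in $(0,\ldots,i)$, then that segment is itself a feasible path of the form $(0,\ldots,i)$, so all of its arcs belong to $A^{-}_{i} \subseteq A^{-}_{T}$; symmetrically, if the $\setap$-arc lies in $(i,\ldots,\np)$, then all arcs of that segment belong to $A^{+}_{i} \subseteq A^{+}_{T}$.

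Next I would use that $i \in S$ while $0,\np \notin S$ to locate a boundary-crossing arc \emph{inside} the feasible segment. In the first case the initial segment runs from $0 \notin S$ to $i \in S$, so it must contain at least one arc of $\incoming{S}$; being an arc of a feasible $(0,\ldots,i)$ path, that arc lies in $\incoming{S} \cap A^{-}_{T}$. In the second case the terminal segment runs from $i \in S$ to $\np \notin S$ and must contain an arc of $\outgoing{S}$, which then lies in $\outgoing{S} \cap A^{+}_{T}$. Either way $\bar{P}_i$ selects an arc $a_i \in (\incoming{S} \cap A^{-}_{T}) \cup (\outgoing{S} \cap A^{+}_{T})$. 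Since the paths $\bar{P}_i$ are arc-disjoint, the arcs $a_i$ are distinct across $i \in T'$; and since $\incoming{S}$ and $\outgoing{S}$ are themselves disjoint, the two sums on the left count each selected arc at most once. Hence the left-hand side is at least $|T'| = \sum_{i\in T} z_i$.

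The main obstacle I anticipate is the bookkeeping that makes the crossing arc land in the correct reachability set: one must check that the $\setap$-arc is assigned cleanly to a single segment so that the whole segment qualifies as a feasible reaching (resp.\ reachable) path, and that the boundary-crossing arc one selects lies inside that very segment rather than in the other half of the path. A secondary point needing care is the distinctness/no-double-counting claim, which rests on the node-disjointness of paths over conflicting nodes together with the disjointness of $\incoming{S}$ and $\outgoing{S}$.
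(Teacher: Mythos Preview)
Your proposal is correct and follows essentially the same approach as the paper's proof: each covered node of $T$ lies on its own feasible path, one of the two segments obtained by splitting at that node is feasible (it contains the $\setap$-arc), and that segment must cross the cut $S$, yielding a distinct selected arc in $(\incoming{S}\cap A^{-}_{T})\cup(\outgoing{S}\cap A^{+}_{T})$. The paper states this argument in two sentences without spelling out the split-at-$i$ mechanism or the distinctness bookkeeping; you have supplied exactly those details, and they are all sound.
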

	\begin{proof}
		The selected nodes from $T$ must be reached from node $0$ or reach node $\np$ via node-disjoint feasible paths. Consequently, the number of selected arcs from $A^{-}_{T}$ crossing the cutset $(\bar{V} \setminus S : S)$ plus the number of selected arcs from  $A^{+}_{T}$ crossing the cutset $(S:\bar{V} \setminus S)$ must be at least equal to the number of selected nodes from $T$.
	\end{proof}
	
	For each $i \in V$, let $\grave{A}^{-}_{i}$ (resp. $\grave{A}^{+}_{i}$) be the set of arcs of all partial paths of the form $(0,\ldots,i)$ (resp. $(i,\ldots,\np)$) that are part of a feasible path from $0$ to $\np$. Based on the RC$\pm$, the following inequalities are also valid, whose proofs rely on a similar argument of the proof of Proposition \ref{propRC}.
	
	\begin{proposition}[Reaching Constraints (RC-)]
		For any set $S \subseteq V$ and any subset $T \subseteq S$ of conflicting nodes, the following inequality is valid for $\polyt$.
		\begin{equation}\label{-rcuts-new}
			\sum_{a \in \incoming{S} \cap \grave{A}^{-}_{T}} y_a \geq \sum_{i\in T}  z_i.
		\end{equation}
	\end{proposition}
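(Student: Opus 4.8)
The plan is to mirror the argument used for the RC$\pm$ inequality \eqref{rcuts}, but to exploit the larger arc set $\grave{A}^{-}_{T}$ so that only the incoming cut across $S$ is needed. It suffices to verify the inequality on the integer points of $\polyt$, i.e., on the characteristic vectors of collections of node-disjoint feasible paths, since $\polyt$ is their convex hull and a linear inequality valid at every vertex is valid on the whole polytope.

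First I would fix such a feasible solution and restrict attention to the nodes of $T$ that are actually covered, i.e., those $i \in T$ with $z_i = 1$. For each such $i$ there is a unique feasible path $\bar{P}_i=(0=v_0,v_1,\dots,v_{h+1}=\np)$ of the solution with $i=v_p$ for some $1 \le p \le h$. Since $0 \notin S$ while $i \in S$, the prefix $(v_0,\dots,v_p)$ must cross the cut $\incoming{S}$; let $a_i=(v_t,v_{t+1})$ be the first such crossing arc, so that $t<p$ and $a_i \in \incoming{S}$. Because $a_i$ lies on the prefix $(0,\dots,i)$ of the globally feasible path $\bar{P}_i$, it belongs to $\grave{A}^{-}_{i}\subseteq \grave{A}^{-}_{T}$ by definition. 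The key observation is that this step uses only that $\bar{P}_i$ is feasible as a whole path from $0$ to $\np$, and \emph{not} that its prefix is feasible on its own; this is exactly what licenses dropping the outgoing term $\sum_{a \in \outgoing{S} \cap A^{+}_{T}} y_a$ that is present in \eqref{rcuts}. Hence $a_i \in \incoming{S}\cap \grave{A}^{-}_{T}$ and $y_{a_i}=1$.

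Next I would invoke the conflicting property of $T$: no single path can pass through two nodes of $T$, so distinct covered nodes $i,j \in T$ lie on distinct, node-disjoint paths. Consequently the crossing arcs $a_i$ are pairwise distinct, and summing the contributions gives
\begin{equation*}
\sum_{a \in \incoming{S}\cap \grave{A}^{-}_{T}} y_a \;\ge\; \bigl|\{\, i \in T : z_i=1 \,\}\bigr| \;=\; \sum_{i \in T} z_i,
\end{equation*}
which is the claimed inequality \eqref{-rcuts-new}.

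I expect the only delicate point to be the claim that the first cut-crossing arc of $\bar{P}_i$ genuinely lies in $\grave{A}^{-}_{i}$. This is precisely where the distinction between $A^{-}_{i}$ and $\grave{A}^{-}_{i}$ matters, because the $\setap$-arc witnessing feasibility of $\bar{P}_i$ may appear \emph{after} $i$ rather than before it, so the prefix need not be a feasible path by itself. Checking that the prefix of any globally feasible $0$-to-$\np$ path always qualifies as a ``partial path that is part of a feasible path from $0$ to $\np$'' is routine, but it is the crux that permits discarding the $\outgoing{S}\cap A^{+}_{T}$ contribution and thereby strengthening \eqref{rcuts} into \eqref{-rcuts-new}.
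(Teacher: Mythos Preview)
Your argument is correct and is exactly the ``similar argument'' the paper alludes to: for each covered $i\in T$ the (unique) feasible $0$--$\np$ path through $i$ must enter $S$, the entering arc lies in $\grave{A}^{-}_{i}\subseteq\grave{A}^{-}_{T}$ by definition of the relaxed reaching set, and node-disjointness (from the conflicting property) makes these arcs pairwise distinct. The paper gives no more detail than a pointer to the proof of Proposition~\ref{propRC}, so your write-up simply spells out that same idea.
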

	
	\begin{proposition}[Reachable Constraints (RC+)]
		For any set $S \subseteq V$ and any subset $T \subseteq S$ of conflicting nodes, the following inequality is valid for $\polyt$.
		\begin{equation}\label{+rcuts-new}
			\sum_{a \in \outgoing{S} \cap \grave{A}^{+}_{T}} y_a \geq \sum_{i\in T}  z_i.
		\end{equation}
	\end{proposition}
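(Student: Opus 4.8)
The plan is to adapt the counting argument behind Proposition~\ref{propRC}, retaining only the ``reachable'' half of the cut. Since $\polyt$ is defined as the convex hull of the $0/1$ vectors satisfying \eqref{F3R:in-V}--\eqref{F3R:yvar}, it suffices to verify \eqref{+rcuts-new} at each such extreme point. Every such point is the characteristic vector of a collection of node-disjoint paths from $0$ to $\np$ in $\bar{G}$, each feasible by the IPCs \eqref{F3R:cons-inf}. Let $T' = \{i \in T : z_i = 1\}$ be the covered nodes of $T$, so $\sum_{i \in T} z_i = |T'|$, and recall the union convention $\grave{A}^{+}_{T}=\bigcup_{i \in T}\grave{A}^{+}_{i}$. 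Because the nodes of $T$ are pairwise conflicting, no two members of $T'$ lie on a common path; hence each $i \in T'$ lies on its own path $\bar{P}_i=(0,\ldots,i,\ldots,\np)$, and these paths are mutually disjoint on their internal nodes.

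The key step is a cut-crossing observation. Fix $i \in T'$ and consider the suffix $(i,\ldots,\np)$ of $\bar{P}_i$. As the terminal part of the feasible $0$–$\np$ path $\bar{P}_i$, every arc of this suffix belongs to $\grave{A}^{+}_{i}$ by definition. Moreover $i \in T \subseteq S \subseteq V$, while $\np \notin V$ and $S \subseteq V$, so $\np \notin S$; thus the suffix begins inside $S$ and ends outside $S$, and must contain an arc $a_i=(u,w)$ with $u \in S$ and $w \notin S$, i.e. $a_i \in \outgoing{S}$. Consequently $a_i \in \outgoing{S}\cap\grave{A}^{+}_{i} \subseteq \outgoing{S}\cap\grave{A}^{+}_{T}$ with $y_{a_i}=1$.

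It remains to check that the arcs $a_i$, $i \in T'$, are distinct, so that they contribute at least $|T'|$ to the left-hand side of \eqref{+rcuts-new}. The paths $\bar{P}_i$ share only the endpoints $0$ and $\np$: an arc out of $0$ has the form $(0,v_1)$, determined by the distinct first internal nodes; an arc into $\np$ has the form $(v_h,\np)$, determined by the distinct last internal nodes; and any other arc joins two internal nodes, hence lies on at most one path. The paths are therefore arc-disjoint, the $a_i$ are distinct, and summing $y_{a_i}=1$ gives $\sum_{a \in \outgoing{S}\cap\grave{A}^{+}_{T}} y_a \ge |T'| = \sum_{i\in T} z_i$. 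The only point I would treat with care is the definitional one: that each crossing arc genuinely lies in $\grave{A}^{+}_{i}$ and not merely in $A^{+}_{i}$, which holds precisely because $\grave{A}^{+}_{i}$ collects arcs of suffixes $(i,\ldots,\np)$ that \emph{complete} to a feasible $0$–$\np$ path rather than suffixes feasible in isolation, and $\bar{P}_i$ supplies exactly such a completion. The remaining cut-crossing and disjointness bookkeeping mirror the proof of Proposition~\ref{propRC}.
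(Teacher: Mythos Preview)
Your proof is correct and follows essentially the same route the paper indicates: the paper does not give a standalone proof for RC+ but simply remarks that it ``rel[ies] on a similar argument of the proof of Proposition~\ref{propRC},'' and your cut-crossing count over the node-disjoint feasible paths covering the selected nodes of $T$ is precisely that argument, carried out in full detail. Your explicit check that the crossing arc lies in $\grave{A}^{+}_{i}$ (rather than merely $A^{+}_{i}$), using that $\bar{P}_i$ itself is a feasible $0$--$\np$ path, is the one point where RC+ differs from RC$\pm$, and you handle it correctly.
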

	
	\subsection{Generalized cut constraints}
	
	\cite{fischetti2001} introduced the generalized cut constraints (GCUTs) for the 
	\texttt{VRPTW}. In GCUTs, a binary indicator $\gamma_{ij}$ is defined for each pair of nodes $i$ and $j$. This indicator is set to $1$ if a feasible path exists in the form $(0, \ldots, i, \ldots, j, \ldots, \np)$, and $0$ otherwise. For a given node set $S$, let $\Pi(S) = \{i \in V\setminus S : \gamma_{ij}=0 \text{ for all } j \in S\}$.
	\begin{proposition}[Generalized cut constraints (GCUTs)]
		For a given set $S \subset V$ and a conflicting set $T \subset S$, the following inequality is valid for $\polyt$.
		\begin{equation} \label{GCUT}
			\sum_{i \in (V \setminus S)\setminus \Pi(T), \; j \in S \setminus \Pi(T)}
			y_{(i,j)} \geq  \sum_{i\in T}  z_i.
		\end{equation}
	\end{proposition}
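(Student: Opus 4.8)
The plan is to establish validity on the integer points of $\polyt$ and then invoke convexity. Recall that $\polyt$ is the convex hull of the characteristic vectors of collections of node-disjoint feasible paths from $0$ to $\np$; since \eqref{GCUT} is linear, it suffices to verify it for an arbitrary such integer solution $y$. Fix one, and recall that $z_t=1$ precisely when $t$ is covered, in which case $t$ lies on a unique path $\bar{P}_t=(v_0=0,v_1,\ldots,v_h,v_{h+1}=\np)$ of the solution, with $t=v_r$ for some $1\le r\le h$.

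First I would isolate, for each covered $t\in T$, a single arc of $\bar{P}_t$ that is counted on the left-hand side of \eqref{GCUT}. Consider the prefix $(v_0,\ldots,v_r)$ running from the depot to $t$. Because $v_0=0\notin S$ while $v_r=t\in S$ (as $T\subseteq S$), there is an index $p$ with $0\le p<r$, $v_p\notin S$ and $v_{p+1}\in S$; fix such a crossing arc $(v_p,v_{p+1})$.

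Next I would check that both endpoints of this arc survive the deletion of $\Pi(T)$, so that $(v_p,v_{p+1})$ genuinely appears in the sum. The head $v_{p+1}$ satisfies $p+1\le r$: if $v_{p+1}=t$ then $v_{p+1}\in T$, hence $v_{p+1}\notin\Pi(T)$ because $\Pi(T)\subseteq V\setminus T$; if $v_{p+1}$ strictly precedes $t$ on $\bar{P}_t$, then the feasible path $\bar{P}_t$ witnesses $\gamma_{v_{p+1}t}=1$, so again $v_{p+1}\notin\Pi(T)$. In either case $v_{p+1}\in S\setminus\Pi(T)$. For the tail $v_p$: it strictly precedes $t$, so either $v_p=0$, which lies in $\bar{V}\setminus S$ and outside $\Pi(T)\subseteq V$, or $v_p\in V\setminus S$ with $\gamma_{v_p t}=1$ (again witnessed by $\bar{P}_t$), whence $v_p\notin\Pi(T)$. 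Thus $(v_p,v_{p+1})$ is one of the arcs summed on the left-hand side, and its variable equals $1$.

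Finally I would account for all covered nodes of $T$ at once. Since $T$ is a conflicting set, no path carries two of its nodes, so distinct covered $t,t'\in T$ lie on node-disjoint paths and the crossing arcs selected for them are distinct. Summing over the covered nodes of $T$ then produces at least $\sum_{i\in T}z_i$ on the left-hand side, proving \eqref{GCUT} for $y$ and hence for all of $\polyt$. The main point requiring care is the bookkeeping around the depot: the produced crossing arc may emanate from $0$, and the argument relies on $0$ being admissible as a tail of a counted arc (it is, since $0\in\bar{V}\setminus S$ and $0\notin\Pi(T)$). Relatedly, one must confirm that erasing the $\Pi(T)$-nodes from the cut never destroys the specific crossing arc constructed above, which is exactly what the $\gamma$-reachability test guarantees; this is the step where the strengthening by $\Pi(T)$ and the correct reading of the depot contribution must be reconciled.
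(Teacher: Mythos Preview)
The paper does not prove this proposition directly; it remarks afterwards that \eqref{GCUT} is dominated by the reachability cuts \eqref{rcuts} (every arc on the left of \eqref{rcuts} also appears on the left of \eqref{GCUT}), and relies on that for validity. Your direct combinatorial argument on the integer points is therefore a different, more self-contained route, and the crossing-arc construction together with the conflict property of $T$ is exactly the right mechanism.

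You have, however, correctly located the one delicate spot --- the depot --- but resolved it incorrectly. The left-hand side sums over $i\in (V\setminus S)\setminus\Pi(T)$, and in this paper $V=\{1,\dots,n\}$ does \emph{not} contain $0$; arcs leaving the depot are therefore not counted, and your claim that ``$0$ is admissible as a tail of a counted arc'' fails under the literal reading. This is not cosmetic: take $V=\{1,2,3\}$, $A=\{(1,2),(1,3),(2,3)\}$, $\setap=\{(2,3)\}$, $S=T=\{2\}$, and the feasible solution consisting of the single path $(0,2,3,\np)$. Then $\Pi(T)=\{3\}$, the left-hand side of \eqref{GCUT} reduces to $y_{(1,2)}=0$, while $z_2=1$, so the inequality is violated. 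The intended index set is almost certainly $(\bar{V}\setminus S)\setminus\Pi(T)$ (equivalently, arcs from $0$ are included), in line with the VRPTW origin of GCUTs where the depot belongs to the ground node set; under that reading your argument is correct, and the paper's domination remark also goes through. You should state this correction explicitly rather than absorb it into the assertion that $0\in\bar{V}\setminus S$ already suffices.
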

	One can observe that \eqref{GCUT} is dominated by \eqref{rcuts}. Each arc on the left-hand side of \eqref{rcuts} also appears on the left-hand side of \eqref{GCUT}. Specifically, for every arc $a=(i,j)$ where $(i,j) \in A^{-}_{i}$, we have $i$ and $j$ not in $\Pi(T)$. However, there might be arcs in the left-hand side of \eqref{GCUT} that are not considered in \eqref{rcuts}. For instance, if $T=\{k\}$, $i \notin S$, and $j \in S$, feasible paths of the form $(0,\ldots,i,k)$ and $(0,\ldots,j,k)$ could exist, but there might not be a feasible path to $k$ that includes $(i,j)$, especially if $\setap=\{(i,k),(j,k)\}$. Consequently, $y_{(i,j)}$ appears in the left-hand side of \eqref{GCUT} but not in \eqref{rcuts}.
	
	\section{Exact branch-and-cut algorithm}\label{sec:exactm}
	This section describes a branch-and-cut (BC) algorithm for solving the \fprob\ based on formulation \mFB\ strengthened with the different valid inequalities described in the previous section.
	
	The algorithm was developed within the Gurobi framework using Gurobi callback functions. These functions allow the programmer to customize the general approach embedded in Gurobi extensively. For instance, one can select the next node to explore in the enumeration tree, choose the branching variable, define a problem-dependent branching scheme, incorporate their cutting planes, apply custom heuristic methods, and more. For additional details about the usage of these callback functions, readers can refer to the documentation of the Gurobi callable library \citep{gurobi}. Our work focused on designing and implementing separation algorithms for the various cuts and leveraging \citep{gurobi} callback functions to handle user constraints.
	
	\subsection{Separation routines}
	
	In this section, we describe the separation algorithms for the classes of inequalities listed in Section \ref{sec:validineq}.
	
	Given a solution $y^\ast \in \Real^{|\bar{A}|}$ of the Linear Programming (LP) relaxation of formulation \mFB, let $G_{y^\ast} = (\bar{V},\{(i,j) \in \bar{A} : y^\ast_{ij} > 0\})$ be the \textit{support graph} associated with solution $y^\ast$, and let $z^\ast_i = \sum_{a \in \delta^+(i)} y^\ast_a$, for all $i \in V$. Clearly, graph $G_{y^\ast}$ is a DAG.
	
	We first observe that the TIC \eqref{eq:TIC} can be easily separated in polynomial time. 
	
	\subsubsection*{IPCs \eqref{F3R:cons-inf}}
	
	The separation problem associated with the IPCs asks for finding a path $\bar{P}^\ast = (v_0=0,v_1,v_2,\dots,v_h,v_{h+1}=\np)$ in $G_{y^\ast} \setminus \setap$ such that $\bar{P}^\ast = \argmax_{\bar{P} \text{ in } G_{y^\ast}} \big \{\sum_{i=0}^h y^\ast_{(v_i,v_{i+1})} - h\big \}$. Since $G_{y^\ast}$ is acyclic, the path $\bar{P}^\ast$ can be computed in polynomial time by finding the longest path in the graph $G_{y^\ast}$ from node 0 to node $\np$, using the cost matrix $[u_{ij}]$, where the cost of arc $(i,j)$ is computed as $u_{ij} = y^\ast_{ij}-1$. Let $\alpha$ be the optimal solution cost of the longest path (if any). If $\alpha +1 > 0$, then a violated IPC is found, and path $\bar{P}^\ast$ corresponds to the path maximizing the violation of the constraints.
	
	\subsubsection*{TC-I \eqref{tournament} and TC-II \eqref{vlifting}}
	
	To separate the TC-I, we employ the same separation procedure used for the IPCs to identify any violated IPC. The resulting inequality (if found) is then lifted to derive the corresponding TC-I. TC-II inequalities are also derived from a violated IPC by iteratively checking for the existence of a node $v_k$ not belonging to the IPC. If adding this node to the path results in another IPC violation, a TC-II inequality is generated for each violated IPC found during these iterations, along with the corresponding TC-I inequality.
	
	\subsubsection*{A-RC \eqref{eq:ARC} and A-GRC \eqref{eq:A-GRC}}
	
	We first need to determine the $\setap$-reaching and $\setap$-reachable arc sets for each $i \in V$. We perform a Depth-First Search (DFS) in $\bar{G}$. For each arc in $\setap$, we check whether a path exists from this arc to a node $i$; if such a path exists, we add the arc to $\setap^{-}_{i}$. Similarly, for each arc in $\setap$, we check whether a path exists from a node $i$ to this arc; if such a path exists, we add the arc to $\setap^{+}_{i}$.
	
	The separation problem associated with the A-GRCs~\eqref{eq:A-GRC} consists of finding a set $T^\ast$ of conflicting nodes in $\bar{G}$ such that  $T^\ast \in
	\argmax_{T \subset V}\left \{\sum_{i\in T}  z_i^\ast -  \sum_{a\in \setap^{-}_{T} \cup \setap^{+}_{T}} y_a^\ast \right \}$.  Let $\tilde{G}_w = (\bar{V}, \tilde{E})$ be a weighted graph constructed from $\bar{G}$ as follows. The node set remains $\bar{V}$. The edge set $\tilde{E}$ includes the edge $\{i,j\}$ if and only if a path exists from $i$ to $j$ or from $j$ to $i$ in $\bar{G}$. The weight of each node $i \in \bar{V}$ is given by $W(i) = z^\ast_i - \sum_{a \in \setap^{-}_{i} \cup \setap^{+}_{i}} y^\ast_{a}$. Observe that a set of conflicting nodes in  $\bar{G}$ corresponds to an independent set in $\tilde{G}_w $. Therefore, the exact separation of the A-GRCs~\eqref{eq:A-GRC} requires finding an independent set $T^\ast$ in $\tilde{G}_w$ that maximizes $\sum_{i \in T^\ast} z^\ast_i - \sum_{a \in \setap^{-}_{T^\ast} \cup \setap^{+}_{T^\ast}} y^\ast_{a}$.
	
	We adapt a greedy heuristic, namely \texttt{GWMIN}, proposed in~\cite{SMK03} for the maximum weighted independent set problem. The \texttt{GWMIN} heuristic selects a node $i$ that maximizes $W(i)/(d_{\tilde{G}_w^h}(i) + 1)$, where $d_{\tilde{G}_w^h}(i)$ is the degree of $i$ in $\tilde{G}_w$ during the $h^{\text{th}}$ iteration. After selecting node $i$, it and its neighbors are removed from the graph. This process repeats until no nodes remain, and the selected nodes form the independent set to be returned.
	
	We modify this heuristic to separate the A-GRCs~\eqref{eq:A-GRC} to account for the arcs of $\setap$ that are shared between the selected nodes. Specifically, at each iteration, when a node $i$ is selected, we update the weights of the remaining nodes \(j\) with \((\setap^{-}_{i} \cup \setap^{+}_{i}) \cap (\setap^{-}_{j} \cup \setap^{+}_{j}) \neq \emptyset\). The updated weight for node \(j\) is calculated as \(z^\ast_j - \sum_{a \in (\setap^{-}_{j} \cup \setap^{+}_{j}) \setminus (\setap^{-}_{i} \cup \setap^{+}_{i})} y^\ast_{a}\). Let \(W^\ast\) be the weight of the independent set found. If \(W^\ast > 0\), then a violated A-GRC is identified.
	
	In the implementations, we also explored an alternative approach using the exact algorithm provided in Cliquer~\cite{niskanen2003cliquer} to find a maximum-weight independent set in $\tilde{G}_w$.
	
	\subsubsection*{RC$\pm$ \eqref{rcuts}}\label{sep:rcuts}
	
	The identification of $A^{-}_{i}$ and $A^{+}_{i}$ for each $i \in V$ is also based on DFS. After running this algorithm, we examine each arc $(j,k) \in \bar{A}$ to check whether there is a path from $k$ to $i$ and whether there exists an arc $(u,v) \in \setap$ such that there is a path from $v$ to $j$, or a path from $k$ to $u$ and from $v$ to $i$. If these conditions are met, the arc $(j,k)$ is included in $A^{-}_{i}$. The set $A^{+}_{i}$ is computed similarly.
	
	For a given set of conflicting nodes $T$, the separation of RC$\pm$ \eqref{rcuts} consists of finding a set $S^\ast \subseteq \bar{V}$ such that $S^\ast \in \argmin_{S \subseteq \bar{V} s.t. S \supseteq T}\left \{\sum_{a \in \incoming{S} \cap A^{-}_{T}} y_a^\ast  +  \sum_{a \in \outgoing{S} \cap A^{+}_{T}} y_a^\ast \right \}$. We show in the following that finding a set $S^*$ reduces to solving two maximum flow problems derived from the two components of the latter expression. 
	
	Let $\tilde{G}_1$ be the graph with node set $\bar{V}$ and arc set $A^{-}_{T} \cup \{(i,\np) : i \in T\}$. Each arc $(i,j) \in A^{-}_{T}$ is assigned a capacity $y^\ast_{ij}$, while each arc in $\{(i,\np) : i \in T\}$ has an infinite capacity. Similarly, let $\tilde{G}_2$ be the graph with node set $\bar{V}$ and arc set $A^{+}_{T} \cup \{(0,i) : i \in T\}$. Each arc $(i,j) \in A^{+}_{T}$ is assigned a capacity $y^\ast_{ij}$, and each arc in $\{(0,i) : i \in T\}$ has an infinite capacity.
	
	Given a minimum cut in $\tilde{G}_1$, define $S_1^\ast$ as the set of nodes on the sink side of the cut, excluding both isolated nodes in $\tilde{G}_1$ and the sink itself. Clearly, $S_1^\ast$ is an element of the set  $\argmin_{S \subseteq \bar{V} s.t. S \supseteq T} \left\{\sum_{a \in \incoming{S} \cap A^{-}_{T}} y^\ast_a \right\}$. Similarly, for a minimum cut in $\tilde{G}_2$, define $S_2^\ast$ as the set of nodes on the source side of the cut, excluding both isolated nodes in $\tilde{G}2$ and the source. Clearly, $S_2^\ast$ is an element of the set $\argmin_{S \subseteq \bar{V} s.t. S \supseteq T} \left \{ \sum_{a \in \outgoing{S} \cap A^{+}_{T}} y^\ast_a \right \}$. The following proposition holds.

	\begin{proposition}
		$S_1^\ast \cup S_2^\ast \in \argmin_{S \subseteq \bar{V} s.t. S \supseteq T} \left\{\sum_{a \in \incoming{S} \cap A^{-}_{T}} y^\ast_a \right\}$ and $S_1^\ast \cup S_2^\ast \in \argmin_{S \subseteq \bar{V} s.t. S \supseteq T} \left \{ \sum_{a \in \outgoing{S} \cap A^{+}_{T}} y^\ast_a \right \}$.
	\end{proposition}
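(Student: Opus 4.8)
The plan is to prove the stronger statement that the union $S_1^\ast \cup S_2^\ast$ simultaneously attains both minima, i.e.\ that adding the nodes of $S_2^\ast$ to $S_1^\ast$ does not change the value of the first objective, and symmetrically. Write $f_1(S)=\sum_{a \in \incoming{S} \cap A^{-}_{T}} y^\ast_a$ and $f_2(S)=\sum_{a \in \outgoing{S} \cap A^{+}_{T}} y^\ast_a$. The starting observation I would record is that both endpoints of every arc of $A^{-}_{T}$ belong to the node set $R^{-}$ of all nodes incident to some arc of $A^{-}_{T}$; hence whether an arc of $A^{-}_{T}$ is counted in $f_1(S)$ depends only on $S \cap R^{-}$, so that $f_1(S)=f_1(S')$ whenever $S \cap R^{-}=S' \cap R^{-}$. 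Symmetrically, $f_2$ depends only on $S \cap R^{+}$, where $R^{+}$ is the set of nodes incident to an arc of $A^{+}_{T}$. By construction the isolated nodes discarded when forming $S_1^\ast$ (resp.\ $S_2^\ast$) are exactly those outside $R^{-}$ (resp.\ $R^{+}$), the infinite-capacity arcs force $T \subseteq S_1^\ast$ and $T \subseteq S_2^\ast$, and $0,\np \notin S_1^\ast \cup S_2^\ast$.

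The key step, which I expect to carry the real content of the argument, is the structural disjointness
\[
R^{-} \cap R^{+} \subseteq T \cup \{0,\np\}.
\]
I would prove this by contradiction, using that $T$ is a conflicting set in the DAG $\bar{G}$. Suppose $v \in R^{-} \cap R^{+}$ with $v \notin T \cup \{0,\np\}$. From $v \in R^{-}$ there is some $i \in T$ and a feasible path $(0,\ldots,v,\ldots,i)$, which yields a directed path from $v$ to $i$; from $v \in R^{+}$ there is some $i' \in T$ and a feasible path $(i',\ldots,v,\ldots,\np)$, which yields a directed path from $i'$ to $v$. Concatenating produces a directed path from $i'$ through $v$ to $i$. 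If $i' \neq i$, this path visits two distinct nodes of $T$, contradicting that $T$ is conflicting; if $i' = i$, it is a directed cycle through $v \neq i$, contradicting acyclicity of $\bar{G}$. Hence no such $v$ exists.

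Finally I would combine the two ingredients. Since $S_2^\ast$ excludes $0,\np$ and every node outside $R^{+}$, each of its elements lies in $R^{+} \cup T$; intersecting with $R^{-}$ and applying the disjointness above gives $S_2^\ast \cap R^{-} \subseteq T$. As $T \subseteq S_1^\ast$, this forces $(S_1^\ast \cup S_2^\ast) \cap R^{-}=S_1^\ast \cap R^{-}$, whence $f_1(S_1^\ast \cup S_2^\ast)=f_1(S_1^\ast)=\min_{S \supseteq T} f_1(S)$. The symmetric argument, swapping the roles of $A^{-}_{T},A^{+}_{T}$ and of $S_1^\ast,S_2^\ast$, gives $f_2(S_1^\ast \cup S_2^\ast)=f_2(S_2^\ast)=\min_{S \supseteq T} f_2(S)$. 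Because $S_1^\ast \cup S_2^\ast \supseteq T$ is feasible for both minimization problems, it attains both minima, which is exactly the claim. An alternative route would invoke submodularity of the cut functions $f_1,f_2$ and the lattice structure of their minimizers, but the direct structural argument above sidesteps that machinery and is, I expect, the cleanest path.
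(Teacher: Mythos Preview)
Your proof is correct and follows essentially the same approach as the paper's: both hinge on the structural observation that no node outside $T$ (and $\{0,\np\}$) can be incident to arcs of both $A^{-}_{T}$ and $A^{+}_{T}$, proved via the conflicting property of $T$ and acyclicity of $\bar{G}$, and both then use that $S_1^\ast\setminus T$ and $S_2^\ast\setminus T$ live in these disjoint incidence classes by construction. Your presentation is slightly more explicit in isolating the dependence of $f_1,f_2$ on $S\cap R^{-}$ and $S\cap R^{+}$, but the argument is the same.
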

	
	\begin{proof}
		First, observe that no node $i \in \bar{V} \setminus T$ can be incident to arcs in both $A^{-}_{T}$ and $A^{+}_{T}$ simultaneously. If this were the case, there would be a path from some node in $T$ to $i$ and another path from $i$ to a node in $T$. This would either contradict the fact that the nodes in $T$ are conflicting (if the two nodes in $T$ are different) or violate the acyclic nature of the graph (if the two nodes in $T$ are the same). On the other hand, by construction of $S_1^\ast$ and $S_2^\ast$, all nodes in $S_1^\ast \setminus T$ are incident to arcs in $A^{-}_{T}$, and all nodes in $S_2^\ast \setminus T$ are incident to arcs in $A^{+}_{T}$. 
		
		Notice that $(S_1^\ast \cup S_2^\ast) \setminus S_1^\ast = S_2^\ast \setminus T$, so all nodes in $(S_1^\ast \cup S_2^\ast) \setminus S_1^\ast$ cannot be incident to any arcs in $A^{-}_{T}$. Consequently, $\sum_{a \in \incoming{S_1^\ast \cup S_2^\ast} \cap A^{-}_{T}} y^\ast_a =  \sum_{a \in \incoming{S_1^\ast} \cap A^{-}_{T}} y^\ast_a $. Similarly, $(S_1^\ast \cup S_2^\ast) \setminus S_2^\ast = S_1^\ast \setminus T$, so all nodes in $(S_1^\ast \cup S_2^\ast) \setminus S_2^\ast$ cannot be incident to any arcs in $A^{+}_{T}$. Therefore, $\sum_{a \in \incoming{S_1^\ast \cup S_2^\ast} \cap A^{+}_{T}} y^\ast_a =  \sum_{a \in \incoming{S_1^\ast} \cap A^{+}_{T}} y^\ast_a $.
	\end{proof}
	
	From this proposition, we can conclude that $S_1^\ast \cup S_2^\ast \in \argmin_{S \subseteq \bar{V} s.t. S \supseteq T}\left \{\sum_{a \in \incoming{S} \cap A^{-}_{T}} y_a^\ast  +  \sum_{a \in \outgoing{S} \cap A^{+}_{T}} y_a^\ast \right \}$. Thus,  $S^\ast$ can be determined by solving two maximum flow problems in $\tilde{G}_1$ and $\tilde{G}_2$.
	
	For the computation of $T$, we use in the experiments the greedy heuristic GWMIN~\cite{SMK03} and the exact algorithm from Cliquer~\cite{niskanen2003cliquer}, and compared the results obtained in both cases.
	
	\subsubsection*{RC- \eqref{-rcuts-new} and RC+\eqref{+rcuts-new}}
	
	The identification of $\grave{A}^{-}_{i}$ and  $\grave{A}^{+}_{i}$ for each $i \in \bar{V}$ is also based on DFS. To determ$\grave{A}^{-}_{i}$, we first check if there is a path from $i$ to an arc in $\setap$. If such a path exists, then $\grave{A}^{-}_{i}$ includes all arcs $(i,k) \in \bar{A}$ for which a path from $k$ to $i$ exists. Otherwise, $\grave{A}^{-}_{i}=A^{-}_{i}$, where $A^{-}_{i}$ is determined as described in Section~\ref{sep:rcuts}. The set $\grave{A}^{+}_{i}$ is identified similarly.
	
	
	We employ the separation procedure proposed in~\cite{lysgaard2006} to separate RC- \eqref{-rcuts-new} and RC+\eqref{+rcuts-new}. For a given set of conflicting nodes $T$, the separation problem for RC- \eqref{-rcuts-new} involves finding a set $S^\ast \subseteq \bar{V}$ such that $S^\ast \in \argmin_{S \subseteq \bar{V} s.t. S \supseteq T}\left \{\sum_{a \in \incoming{S} \cap \grave{A}^{-}_{T}} y_a^\ast \right \}$.
	
	To find this set, we construct a graph with the node set $\bar{V}$ and the arc set $\grave{A}^{-}_{T} \cup \{(i,\np): i \in T\}$. Each arc $(i,j) \in \grave{A}^{-}_{T}$ is assigned a capacity of $y^\ast_{ij}$, while each arc in $\{(i,\np): i \in T\}$ is given infinite capacity. Finding $S^\ast$ corresponds to finding a minimum cut in this graph. After solving the maximum flow problem with $0$ as the source and $\np$ as the sink, the nodes on the sink side of the minimum cut constitute the set $S^\ast$. If the maximum flow value is less than $\sum_{i\in T}  z^\ast_i$, then a violated RC- is identified. The separation of RC+\eqref{+rcuts-new} is performed similarly.
	
	In the experiments, we compute set $T$ using the greedy heuristic GWMIN~\cite{SMK03} and the exact algorithm from Cliquer~\cite{niskanen2003cliquer}, similar to the previous classes of cuts, and compare the results obtained with each approach.
	
	\subsection{Separation, branching, and node selection strategies}
	
	When the solution to the LP relaxation at a given node is integral, Gurobi invokes the \codef{Callback} method to assess its feasibility. If an IPC or a TIC is identified, the constraints are added to the master problem using the \codef{addLazy} function.
	
	When an LP relaxation is solved and the node is not pruned, the current master solution is examined to identify any violated inequalities. Based on preliminary computational experiments, the inequalities are separated in the following order according to their computational complexity: (1) IPC, TC-I, and TC-II; (2) A-GRC; (3) RC$\pm$, RC-, and RC+. If any of these procedures generate a cutting plane, the subsequent routines are bypassed. For all three variants, IPC and TIC are included as lazy cuts. Note that TIC, if added, is included only once. When violated inequalities are identified, they are incorporated into the master problem using the \codef{addCut} function.
	
	Following the results of preliminary experiments, we have adopted the ``lowest first'' strategy as our node-selection rule.
	
	\section{Computational experiments}\label{sec:compres}
	
	This section presents the results of the BC algorithm described in Section \ref{sec:exactm}. Additionally, the Appendix \ref{sec:appendix} presents detailed results from the runs of various BC algorithm variants, along with the results of formulation $\mFA$. 
	All the implementations were done in C++ on a machine equipped with an 11\textsuperscript{th} Gen Intel\textsuperscript{\textregistered} Core\textsuperscript{TM} i5-11500 @ 2.70GHz~$\times$~12 processor and 32 GiB of RAM. We used the Gurobi 11.0.2 solver \citep{gurobi}. 
	
	In Section~\ref{sec:instances}, we describe the test instances, and in Section~\ref{sec:results}, we present the results obtained for these sets,
	including the effectiveness of root node cuts and details about the best-performing BC variant for each set.
	
	\subsection{Instances} \label{sec:instances}
	
	We considered three sets of instances: A, B, and C. Set A, and C were randomly generated, while set B was derived from real-world instances of the crew scheduling problem at Air France, as used in~\cite{tellache2024linear}.
	
	The instances of A are generated as follows. We first construct a complete DAG (transitive closure of a path) with $n$ nodes. Then, we remove arcs randomly from the graph with probability $(1-p_{a})$, provided that the resulting graph remains connected. The arcs of $\setap$ are subsequently selected from $A$ randomly with probability $p_{a'}$. All random selections in this process follow a uniform distribution.
	
	When running the exact BC algorithm on this set of instances, we observed that it consistently produces optimal solutions, particularly when $p_{a'} \geq 0.2$. We attribute this to enough arcs in $\setap$ that are uniformly distributed in the DAG, allowing the optimal MPC solution to be easily transformed into a feasible solution under $\setap$-related constraints by adding only a few lazy cuts in most cases. Based on this observation, we divided these instances into two groups: group A.1, containing instances with $p_{a'} \geq 0.2$, and group A.2 with $p_{a'} < 0.2$. In the second group, the number of arcs in $\setap$ is between $3$ and $19$, making it difficult to guarantee the existence of an MPC. For each combination of $(n,p_{a},p_{a'})$ in A.1, five instances were generated. In contrast, for group A.2, 150 instances were generated, from which the five most difficult instances were selected based on the variant of the BC algorithm that uses only lazy constraints of type IPCs and TIC. In set A.1, we considered all the combinations of $n \in \{100, 300, 500\}$ and $p_{a}, p_{a'} \in \{0.2, 0.5, 0.8\}$, yielding a total of 135 instances. In set A.2, we also considered $n \in \{100, 300, 500\}$ and $p_{a}\in \{0.2, 0.5, 0.8\}$, while $p_{a'}$ varied from $0.01$ to $0.0001$, giving a total of 45 instances across nine groups. The characteristics of sets A.1 and A.2 are summarized in Tables~\ref{tab:instA1C} and~\ref{tab:instA2}, respectively, where $\#inst$ denotes the number of instances in each group.
	
	\begin{table}
		\caption{Characteristics of instance sets A.1 and C} \label{tab:instA1C}
		\begin{center}
			{\scalebox{1.0}{
					\begin{tabular}{rrrrrrrr}
						\toprule
						&&&&\multicolumn{2}{c}{ A.1} & \multicolumn{2}{c}{ C}  \\
						\cmidrule{5-6}\cmidrule{7-8}
						$n$&$p_a$&$p_{a'}$  & $\#inst$  & Average number  & Average number& Average number  & Average number \\ 
						&& &   & of arcs &  of arcs in $\setap$& of arcs &  of arcs in $\setap$ \\
						\midrule
						100	&0.2	&0.2&	5& 975.8&   201.0  &  976.6&      195.0\\
						&&0.5&	5&1,000.8&     493.0&  969.2&    484.2\\
						&&0.8&	5& 982.0&        788.4&  976.0&     780.4\\
						&0.5&	0.2	&5& 2,478.4&  493.6& 2,497.8&     499.0\\
						&&0.5	&5& 2,463.2&   1,235.4 &  2,473.0&    1,236.2\\
						&&0.8	&5&  2,463.8&  1,958.6&  2,584.2&    2,067.0\\
						&0.8&	0.2	&5& 3,960.2&  770.8&  3,971.0&   793.8\\
						&&0.5&	5& 3,947.6&   1,974.2&  4,008.8&    2,004.0\\
						&&0.8&	5& 3,941.0&    3,147.2&  3,980.8&    3,184.4\\
						\midrule
						300	&0.2&	0.2&	5& 8,906.0&   1,793.8& 8,934.8&   1,786.2\\
						&&0.5&5&	 9,010.8&   4,470.6&  8,910.4&   4,454.6\\
						&&0.8	&5& 8,973.0&  7,200.4&  8,931.4&    7,144.2\\
						&0.5	&0.2	&5& 22,414.0&  4,511.0&   22,347.4&   4,468.8\\
						&&0.5&	5& 22,441.8&    11,222.6& 22,356.6&    11,177.2\\
						&&0.8&	5&22,432.4&       17,932.6&  22,385.4&   17,907.2\\
						&0.8	&0.2&5&	 35,864.4&   7,237.4&  35,738.8&    7,146.8\\
						&&0.5&	5& 35,905.8&    17,980.0& 35,780.8&    17,889.8\\
						&&0.8&	5&35,897.6&   28,738.8&35,759.6&     28,607.0\\
						\midrule
						500	&0.2	&0.2	&5&  25,024.2&    4,976.4&   24,897.4&    4,978.8\\
						&&0.5&	5&  25,055.6&   12,538.6&    24,841.0&   12,419.8\\
						&&0.8&	5&24,948.4&    19,950.2&  24,920.2&      19,935.4\\
						&0.5&	0.2&	5&   62,469.4&     12,541.8& 62,289.8&   12,456.4\\
						&&0.5	&5& 62,551.2&    31,307.8& 62,206.0&  31,101.6\\
						&&0.8	&5&  62,511.0&      50,029.2& 62,246.6&    49,795.2\\
						&0.8&	0.2&	5& 99,836.0&  19,922.6&   99,609.0&  19,920.2\\
						&&	0.5&	5& 99,783.8&   49,714.8&  99,679.0&     49,838.4\\
						&&0.8&5&	  99,861.0&  79,831.4& 99,724.0&   79,775.8\\
						\bottomrule
			\end{tabular}}}
	\end{center}\end{table}
	
	\begin{table}
		\caption{Characteristics of instance set A.2} \label{tab:instA2}
		\begin{center}
			{\scalebox{1.0}{
					\begin{tabular}{rrrrrr}
						\toprule
						$n$&$p_a$&$p_{a'}$  & $\#inst$  & Average number  & Average number \\    
						&& &   & of arcs &  of arcs in $\setap$\\
						\midrule
						100	&0.2	&0.01&5& 994.4&          10.0\\
						&0.5&	0.005&5&	 2,479.8&           10.6\\
						&0.8	&0.001&5&	   3,971.0&          5.0\\
						\midrule	
						300	&0.2	&0.001&	5& 8,966.4&   10.4\\
						&0.5&	0.0003	&5&  22,434.4&     6.8\\
						&0.8	&0.0001&5&	  35,888.4&      6.2\\
						\midrule	
						500	&0.2	&0.0003&5& 24,982.2&  8.0  \\
						&0.5	&0.0003&5& 62,414.6&   17.0\\
						&0.8&	0.0001&	5& 99,821.6&    7.6\\
						\bottomrule
			\end{tabular}}}
	\end{center}\end{table}
	
	\begin{table}
		\caption{Characteristics of instance set B} \label{tab:instAF}
		\begin{center}
			{\scalebox{1.0}{
					\begin{tabular}{rrrrr}
						\toprule
						Group   & $\#inst$  & $n$  & Average number  & Average number \\ 
						&   & & of arcs &  of arcs in $\setap$  \\
						\midrule
						$G_1$  & 5&85 &       2,267.4&       1,480.8 \\
						$G_2$  & 5&150 &       7,168.4&       4,507.4\\
						$G_3$  & 5&160 &        8,269.0&       5,109.6\\
						$G_4$  &5 &210 &  14,326.8&       9,301.2\\
						$G_5$  & 5&240 &  18,843.8&      11,958.4\\
						$G_6$  & 5&270 &        23,611.8&      14,958.4\\
						$G_7$  & 5&300 &     28,721.8&        18,386.0 \\
						$G_8$  & 1&500 &  79,519.0&        52,037.0  \\
						\bottomrule
					\end{tabular}
			}}
		\end{center}
	\end{table}
	
	The second set of instances, B, is derived from the crew scheduling instances at Air France used in~\cite{tellache2024linear}. These instances are grouped into eight categories, each containing five instances, except for the last group, which contains one instance. Within each group, all instances share the same number of pairings (a pairing is a sequence of flights starting and ending at the same crew base) for a given month. Each pairing is characterized by a start and end time. The DAG for each instance is constructed as follows: each pairing is represented by a node, and there is an arc between two nodes, $i$, and $j$, if the end time of $i$ precedes the start time of $j$. Additionally, two nodes are introduced: one representing the start of the month, with arcs directed from this node to all other nodes, and another representing the end of the month, with arcs directed from all other nodes to this end-month node. The arcs in $\setap$ are those in $A$ spanning more than seven consecutive days. Table \ref{tab:instAF} summarizes the characteristics of set B, which contains 36 instances.
	
	We observed that all instances in set B have a DAG $G=(V,A)$ with density around $60\%$ (calculated as $100 \times |A|/(|V|(|V|-1)/2)$) and a sparsity of $\setap$ around $60\%$ (calculated as $100 \times (|\setap|/|A|)$). To evaluate the performance of our algorithms on instances with comparable characteristics but varying DAG densities and different levels of sparsity of $\setap$, we generated the instances of set C. These instances were designed to simulate the airline crew scheduling problem. Initially, time intervals were generated, each with start and end times within a one-month horizon. The duration of these intervals was adjusted according to the desired DAG density: lower densities required longer intervals, resulting in fewer arcs between the corresponding nodes. We began with shorter span durations for the arcs in $\setap$ and incrementally increased this value until the required sparsity was achieved. As with set A, we generated five instances for each combination of $(n, p_{a}, p_{a'})$, considering $n \in \{100, 300, 500\}$ and $p_{a}, p_{a'} \in \{0.2, 0.5, 0.8\}$, resulting in a total of 135 instances. The characteristics of the instances of this set are summarized in Table~\ref{tab:instA1C}. 
	
	In the rest of this section, we refer to different groups of instances using the notation $(n, p_{a}, p_{a'})$. Note that graphs with parameters $p_{a}$ and $ p_{a'}$ have a density around $(100 \times p_{a})\%$ and a sparsity of $\setap$ around $(100 \times p_{a'})\%$.
	
	To summarize, the main features of the generated set of instances are as follows:
	\begin{itemize} 
		\item Set A.1: 135 randomly generated using a uniform distribution, with arc density and sparsity of $\setap$ approximately set to $\{20\%, 50\%, 80\%\}$. 
		\item Set A.2: 45 randomly generated using a uniform distribution, with arc density values around ${20\%, 50\%, 80\%}$ and very low sparsity in $\setap$, where the number of arcs ranges from 3 to 19. 
		\item Set B: 36 real-world instances with transitive DAGs, where arcs in $\setap$ have a specific distribution in the DAG, and the arc density and sparsity of $\setap$ are around 60\%. 
		\item Set C: 135 randomly generated to resemble the instances in set B, with arc density and sparsity of $\setap$ set to values around $\{20\%, 50\%, 80\%\}$.
	\end{itemize}
	
	\subsection{Results} \label{sec:results}
	
	This section presents the results of the BC algorithm applied to the three sets of instances. Section~\ref{sec:expbcroot} discusses the effectiveness of the various cuts applied at the root node across all sets, while Section~\ref{sec:expbc} details the results of the best-performing BC variant for each set.
	
	Let $\overline{\mFB}$ represent formulation $\mFB$ without constraints~\eqref{F3R:cons-inf}. By removing these constraints, $\overline{\mFB}$ reduces the problem to covering a maximum number of nodes using a minimum number of paths. We further define $\overline{L\mFB}$[+x] (respectively $\overline{\mFB}$[+x]) as the LP relaxation of $\overline{\mFB}$ (respectively $\overline{\mFB}$) with the addition of cuts from the first class to class x, following the specified order.
	
	The results presented in this section for variants incorporating A-GRC and RC(\(\pm\), \(-\), \(+\)) were obtained using the greedy algorithm GWMIN~\cite{SMK03} to solve the maximum weighted independent set problem. Additional results are provided in the Appendix \ref{sec:appendix} for comparison using an exact algorithm from Cliquer~\cite{niskanen2003cliquer}. We set a one-hour time limit for all runs. 
	
	\begin{table}
		\caption{Lower bounds and effectiveness of the different cuts on set A.1} \label{tab:lbsA1}
		\begin{center}
			\setlength{\tabcolsep}{0pt}
			\begin{tabular*}{\textwidth}{@{\extracolsep{\fill}}rrrrrrrr}
				\toprule
				&&&\multicolumn{2}{c}{ $\overline{L\mFB}$} & \multicolumn{3}{c}{ $\overline{L\mFB}$[+IPCs,TC-I,II]}  \\
				\cmidrule{4-5}\cmidrule{6-8}
				$n$&$p_a$&$p_{a'}$   & $t(s)$ &$\%best-gap$  & $\#cuts$ & $t(s)$   & $\%best-gap$     \\
				\midrule
				100	&0.2	&0.2&	0.006&	0&	55.4&	0.023&	0\\
				&&0.5&	0.005&	0	&30.4&	0.01&	0\\
				&&0.8&	0.005&	0	&13.0	&0.011&	0\\
				&0.5&	0.2	&0.008&	0	&43.6&	0.017&	0\\
				&&0.5	&0.008&	0	&24.6&	0.04&	0\\
				&&0.8	&0.008&	0	&11.0&	0.043&	0\\
				&0.8&	0.2	&0.011	&0	&28.4	&0.055	&0\\
				&&0.5&	0.011&	0	&18.0&	0.068&	0\\
				&&0.8&	0.011&	0	&9.8&	0.026&	0\\
				\midrule
				300	&0.2&	0.2&	0.035&	0	&110.4&	0.069&	0\\
				&&0.5	&0.033&	0	&70&	0.068&	0\\
				&&0.8	&0.034&	0	&31.8&	0.078&	0\\
				&0.5	&0.2	&0.061&	0	&44	&0.858&	0\\
				&&0.5&	0.064&	0	&26.0&	0.835&	0\\
				&&0.8&	0.057&	0	&13.2	&0.894&	0\\
				&0.8	&0.2&	0.083&	0	&26.2&	1.986&	0\\
				&&0.5&	0.079&	0	&16.8&	1.181&	0\\
				&&0.8&	0.08&	0	&10.2&	1.062&	0\\
				\midrule
				500	&0.2	&0.2	&0.087&	0&	107.8&	0.815&	0\\
				&&0.5&	0.086&	0	&69.2&	0.968&	0\\
				&&0.8&	0.086&	0	&30.6&	0.479&	0\\
				&0.5&	0.2&	0.223	&0&	42.0&	1.044&	0\\
				&&0.5	&0.216&	0	&26.8	&0.647&	0\\
				&&0.8	&0.219&	0	&15.0&	1.609&	0\\
				&0.8&	0.2&	0.313&	0	&27.0&	3.139&	0\\
				&&	0.5&	0.337&	0	&16.6&	3.436&	0\\
				&&0.8	&0.281&	0	&7.4	&3.061&	0\\
				\bottomrule
			\end{tabular*}
	\end{center}\end{table}
	
	\begin{table}
			\scriptsize
		\caption{Lower bounds and effectiveness of the different cuts on set A.2} \label{tab:lbsA2}
		\begin{center}
			\setlength{\tabcolsep}{0pt}
			\begin{tabular*}{\textwidth}{@{\extracolsep{\fill}}rrrrrrrrrrrrrr}
				\toprule
				&&&\multicolumn{2}{c}{ $\overline{L\mFB}$} & \multicolumn{3}{c}{ $\overline{L\mFB}$[+IPCs,TC-I,II]} & \multicolumn{3}{c}{ $\overline{L\mFB}$[+A-GRC]} & \multicolumn{3}{c}{ $\overline{L\mFB}$[+RC,$\pm,-,+$]} \\
				\cmidrule{4-5}\cmidrule{6-8} \cmidrule{9-11}\cmidrule{12-14} 
				$n$     & $p_a$ & $p_{a'}$& $t(s)$ &$\%best-gap$  & $\#cuts$ & $t(s)$   & $\%best-gap$  & $\#cuts$  & $t(s)$   & $\%best-gap$   & $\#cuts$ & $t(s)$   & $\%best-gap$      \\
				\midrule
				100	&0.2	&0.01	&0.005	&16.196&	113.0&	0.092	&775.743	&113.0&	0.094	&775.743	&118.8&	0.099	&\textbf{181.741}\\
				&0.5&	0.005&	0.008	&3.209&	119.0	&0.096	&\textbf{237.469}	&119.0&	0.098&	\textbf{237.469}&	123.6&	0.110	&\textbf{237.469}\\
				&0.8	&0.001&	0.011	&3.401	&83.4	&0.080&	164.321&	83.4&	0.083	&164.321	&95.4	&0.088	&\textbf{132.319}\\
				\midrule	
				300	&0.2	&0.001&	0.034	&12.999&	203.0	&0.326&	496.499&	203.0&	0.351	&496.499&	199.8	&0.329&	\textbf{497.269}\\
				&0.5&	0.0003	&0.069&	10.068	&130.6	&0.614&	\textbf{176.370}&	130.6&	0.657&	\textbf{176.370}&138.0	&0.730	&\textbf{176.370}\\
				&0.8	&0.0001&	0.080&	1.867	&233.8&	0.693&	\textbf{42.719}&	233.8&	0.727	&\textbf{42.719}&	266.4&	0.765&	\textbf{42.719}\\
				\midrule	
				500	&0.2	&0.0003&	0.091	&21.559&	180.8	&0.664	&280.197&	180.8	&0.796&	280.197&	180.2&	0.879	&\textbf{45.011}\\
				&0.5	&0.0003&	0.228&	3.161	&250.2&	3.193	&\textbf{34.874}&	250.2	&3.492&	\textbf{34.874}&	250.2&	3.641	&\textbf{34.874}\\
				&0.8&	0.0001&	0.326&	0.520	&263.6	&2.404	&\textbf{43.060}	&263.6&	2.640	&\textbf{43.060}&	386.8	&3.528	&43.235\\
				
				\bottomrule
			\end{tabular*}
	\end{center}\end{table}
	
	\begin{table}
		\caption{Lower bounds and effectiveness of the different cuts on set B} \label{tab:lbsB}
		\begin{center}
			
			\setlength{\tabcolsep}{0pt}
			\begin{tabular*}{\textwidth}{@{\extracolsep{\fill}}rrrrrr}
				\toprule
				&\multicolumn{2}{c}{ $\overline{L\mFB}$} & \multicolumn{3}{c}{ $\overline{L\mFB}$[+IPCs,TC-I,II]}  \\
				\cmidrule{2-3}\cmidrule{4-6}
				Groups    & $t(s)$ &$\%best-gap$  & $\#cuts$ & $t(s)$   & $\%best-gap$     \\
				\midrule
				$G_1$&	0.013&	1.196&	82.6	&0.223&	117.820\\
				$G_2$&	0.012&	0.005&	167.8	&0.593&	16.623\\
				$G_3$&	0.013&	1.143&	247.4	&0.907&	16.988\\
				$G_4$&	0.026&	5.244&	200.0  &	1.608&	16.665\\
				$G_5$&	0.027&	0.008&	299.0&	1.675	&23.713\\
				$G_6$	&0.034&	3.784&	269.4&	1.412	&20.782\\
				$G_7$	&0.043&	2.138&	442.8&	3.107&	22.955\\
				$G_8$	&0.121&	20.021&	105.0&	4.754&	-\\
				
				\bottomrule
			\end{tabular*}
	\end{center}\end{table}

	\begin{table}
			\scriptsize
		\caption{Lower bounds and effectiveness of the different cuts on set C} \label{tab:lbsC}
		\begin{center}
			
			\setlength{\tabcolsep}{0pt}
			\begin{tabular*}{\textwidth}{@{\extracolsep{\fill}}rrrrrrrrrrrrrr}
				\toprule
				&&&\multicolumn{2}{c}{ $\overline{L\mFB}$} & \multicolumn{3}{c}{ $\overline{L\mFB}$[+IPCs,TC-I,II]} & \multicolumn{3}{c}{ $\overline{L\mFB}$[+A-GRC]} & \multicolumn{3}{c}{ $\overline{L\mFB}$[+RC,$\pm,-,+$]} \\
				\cmidrule{4-5}\cmidrule{6-8} \cmidrule{9-11}\cmidrule{12-14} 
				$n$     & $p_a$ & $p_{a'}$& $t(s)$ &$\%best-gap$  & $\#cuts$ & $t(s)$   & $\%best-gap$  & $\#cuts$  & $t(s)$   & $\%best-gap$   & $\#cuts$ & $t(s)$   & $\%best-gap$      \\
				\midrule
				100	&0.2	&0.2&	0.005&	68.326	&85.0&	0.101	&\textbf{0.000}	&85.0&	0.112&	\textbf{0.000}	&79.8	&0.024&	\textbf{0.000}\\
				&&	0.5&	0.005&	48.298&	96.0&	0.165&	29.252&	96.0&	0.188&	29.252	&102.0&	0.058&	\textbf{2.126}\\
				&&0.8&	0.005	&38.483&	99.0&	0.187&	2.785&	99.0&	0.184&	2.785&	102.8	&0.062&	\textbf{1.105}\\
				&0.5&	0.2&	0.009&	44.213&	68.8&	0.130&	\textbf{0.000}&	68.8&	0.215&	\textbf{0.000}	&73.2&	0.062&	2.696\\
				&&0.5	&0.009&	13.042&	139.0&	0.140&	\textbf{90.659}&	139.0&	0.140&	\textbf{90.659}&	138.2	&0.140	&\textbf{90.659}\\
				&&0.8	&0.009&	0.004&	100.4&	0.138&	\textbf{0.000}&100.4&	0.094&	\textbf{0.000}	&108.0	&0.292&	-\\
				&0.8	&0.2	&0.012&	22.120&	37.6&	0.081&	\textbf{0.000}	&37.6&	0.026&	\textbf{0.000}	&37.6&	0.044&	\textbf{0.000}\\
				&&0.5	&0.011&	0.022&	111.4&	0.150&	\textbf{22.021}&	111.4&	0.142&	\textbf{22.021}&	111.4&	0.485	&\textbf{22.021}\\
				&&0.8	&0.011&	0.000	&92.2&	0.116&	\textbf{22.488}&	92.2	&0.119&	\textbf{22.488}&	92.2	&0.385&	\textbf{22.488}\\
				\midrule
				300	&0.2	&0.2	&0.032&	65.177&	221.8	&0.708&	6.295&	221.8&	0.455	&6.295&	230.4&	0.275	&\textbf{4.243}\\
				&&0.5	&0.035&	45.434&	301.0&	0.858&	\textbf{10.162}&	301.0&	0.317&	\textbf{10.162}&	301.0&	0.505&	\textbf{10.162}\\
				&&0.8	&0.035&	33.833&	261.6&	1.524	&5.468&	261.6	&0.398&	5.468	&263.0&	0.401&	\textbf{4.764}\\
				&0.5	&0.2	&0.051&	39.741&	121.8	&1.470&	3.283&	121.8&	0.438&	3.283	&114.0&	1.600&	\textbf{0.572}\\
				&&0.5&	0.050&	9.286&	278.4&	2.408&	\textbf{13.528}&	278.4&	0.831&	\textbf{13.528}&	278.4&	1.614&	\textbf{13.528}\\
				&&0.8	&0.059&	2.131&	252.8&	1.779	&\textbf{21.332}&	252.8&	0.663&	\textbf{21.332}&	252.8&	1.927&	\textbf{21.332}\\
				&0.8	&0.2	&0.077&	19.034&	60.0&	0.370&	\textbf{0.000}&	60.0&	0.346&	\textbf{0.000}	&56.4&	0.438&	\textbf{0.000}\\
				&&0.5	&0.076&	13.478&	213.8&	3.922&	\textbf{17.006}&	213.8	&2.369&	\textbf{17.006}&	213.8	&1.395&	\textbf{17.006}\\
				&&0.8&	0.076&	0.133	&193.6	&3.451	&\textbf{31.006}&	193.6&	2.550	&\textbf{31.006}&	193.6&	1.873&	\textbf{31.006}\\
				\midrule
				500	&0.2&	0.2&	0.057&	66.027&	309.2	&2.245&	2.180&	309.2&	2.456&	2.180&	302.8&	2.383	&\textbf{1.536}\\
				&&	0.5&	0.057&	44.860&	415.4&	2.699&	\textbf{14.395}&	415.4&	0.927&	\textbf{14.395}&	415.4&	1.845&	\textbf{14.395}\\
				&&0.8	&0.059&	30.621&	369.8&	2.113&	\textbf{7.027}&	369.8&	1.856&	\textbf{7.027}&	369.8	&2.273	&\textbf{7.027}\\
				&0.5	&0.2	&0.166&	37.645&	164.2	&4.015&	\textbf{0.653}	&164.2&	1.683&	\textbf{0.653}	&171.6	&2.472	&1.869\\
				&&0.5	&0.156&	10.379&	422.8&	6.350&	\textbf{7.199}&	422.8&	6.350&	\textbf{7.199}&	422.8&	2.693	&\textbf{7.199}\\
				&&0.8	&0.155	&11.839&	331.6	&5.145&	\textbf{9.650}&	331.6&	2.109	&\textbf{9.650}&331.6	&1.904	&\textbf{9.650}\\
				&0.8&	0.2	&0.264	&19.982&	66.8&	1.394&	\textbf{0.000}	&66.8	&0.592&	\textbf{0.000}	&66.8	&0.522&	\textbf{0.000}\\
				&&0.5	&0.257	&18.286&	298.0&	10.563&	\textbf{4.210}&	298.0&	4.409&	\textbf{4.210}&	298.0&	3.638&	\textbf{4.210}\\
				&&0.8	&0.264&	27.361	&292.4	&16.735&-	&292.4	&7.367&	-	&	292.4&5.939&	-\\
				
				\bottomrule
			\end{tabular*}%
	\end{center}\end{table}
	
	\subsubsection{Performance of the BC algorithm at the root node} \label{sec:expbcroot}
	
	This section presents the results of applying different cut classes to the LP relaxation of formulation \mFB. 
	
	Tables~\ref{tab:lbsA1}-\ref{tab:lbsC} summarize the results for sets A.1, A.2, B, and C, respectively. The metrics presented in these tables are as follows: $t(s)$ represents the average CPU time in seconds; $\#cuts$ indicates the total number of cuts added; and $\%best-gap$ denotes the average gap concerning the best-known solution for each instance. This gap is calculated as$(|(z^\ast - LB) / LB|) \times 100$, where $LB$ is the objective value returned by $\overline{L\mFB}$\ or $\overline{L\mFB}$[+x], and  $z^\ast$ is the best-known objective value obtained from all the runs of $\overline{L\mFB}$[+x] and $\overline{\mFB}$[+x]. In the tables, bold values indicate the best gap values.
	
	Table~\ref{tab:lbsA1} presents the results for set A.1. The $\%best-gap$ is null for $\overline{L\mFB}$\ across all instances, indicating that the optimal objective values match the best-known objective values. However, the corresponding solutions are not necessarily feasible for \mFB\ with respect to the constraints related to $\setap$, as $\overline{L\mFB}$[+IPCs,TC-I,II] added cuts to enforce feasibility without increasing the objective function value (the same number of selected paths and covered nodes, see Tables~\ref{tab:LF2A1} and~\ref{tab:LBv1A1} for the corresponding details). The number of these cuts decreases as the values of $p_a$ and $p_{a'}$ increase. This can be explained by the fact that when the number of arcs in $A$ and  $\setap$ increases, the likelihood that the paths selected by \mFB\ are already feasible also increases, thereby reducing the number of cuts needed to make those paths feasible. By incorporating the classes IPC, TC-I, and TC-II, all instances were solved to optimality in an average of less than 4 seconds. Since this variant achieved optimality for all instances, we did not include results for other variants with additional classes of cuts, as their separation procedures increased CPU time without enhancing the already optimal solutions. Further tests on larger instances with 1000 and 2000 nodes showed that the algorithm maintained its efficiency, solving all instances to optimality in under 30 seconds on average.

	For sets A.2 and C, $\overline{L\mFB}$[+IPCs,TC-I,II] and $\overline{L\mFB}$[+A-GRC] variants yield identical solutions across all instances (with the same $\%best-gap$), and the number of added cuts remains the same, indicating that no A-GRC cuts have been added (see Tables~\ref{tab:lbsA2} and~\ref{tab:lbsC}). In contrast, $\overline{L\mFB}$[+RC,$\pm,-,+$] demonstrates improvements across most instance classes, albeit with a slight increase in CPU time (see Tables~\ref{tab:compareLBA2} and~\ref{tab:compareLBC} in Appendix \ref{sec:appendix} for more details on the comparisons). For set A.2, all three variants solved 3 out of 45 instances to optimality at the root node (see Tables~\ref{tab:LBv1A2}-\ref{tab:LBv3CA2}). For set C, $\overline{L\mFB}$[+IPCs,TC-I,II] and $\overline{L\mFB}$[+A-GRC] solved 28 instances to optimality, while $\overline{L\mFB}$[+RC,$\pm,-,+$] solved to optimality 35 out of 135 instances (see Tables~\ref{tab:LBv1C}-~\ref{tab:LBv3cC}). 
	
	For set B, only $\overline{L\mFB}$[+IPCs,TC-I,II] results are presented in Table~\ref{tab:lbsB}, as the other variants produced the same values for all the criteria but generally required more CPU time due to the more time-consuming separation procedures (see Table~\ref{tab:compareLBB}). None of the instances were solved to optimality at the root node, and 22 out of 36 instances had a null objective value, meaning the initial solution with no path selected and no node covered could not be improved at the root node in these cases (see Table~\ref{tab:LBv1B}).
	
	Regarding the quality of the solutions produced, we observe that the value of $\%best-gap$ for $\overline{L\mFB}$\ generally decreases as the values of $p_a$ and $p_{a'}$ increase. This is because the number of arcs in $A$ and $\setap$ increases, enhancing the likelihood that the paths selected by $\overline{L\mFB}$\ are feasible for \mFB. The $\%best-gap$ values produced by $\overline{L\mFB}$\ are better than those generated by $\overline{L\mFB}$[+x] for sets A.2 and B but are generally higher than those produced by $\overline{L\mFB}$\ for set C.
	
	The best-performing variant at the root node can be summarized as follows:
	\begin{itemize} 
		\item Set A.1: $\overline{L\mFB}$[+IPCs,TC-I,II], with all 135 instances solved to optimality.
		\item Set A.2: $\overline{L\mFB}$[+RC,$\pm,-,+$], with 3 instances out of 45 solved to optimality and an average gap of $154.556\%$ from the best-known solution.
		\item Set B: $\overline{L\mFB}$[+IPCs,TC-I,II], with none of the 36 instances solved to optimality and an average gap of $29.443\%$ from the best-known solution.
		\item Set C: $\overline{L\mFB}$[+RC,$\pm,-,+$], with 36 instances out of 135 solved to optimality and an average gap of $11.786\%$ from the best-known solution.
	\end{itemize}
	
	Further details on the results are provided in the Appendix \ref{sec:appendix}. Tables~\ref{tab:LF2A1}-\ref{tab:LF2C} present the results of $\overline{L\mFB}$\ for sets A.1, A.2, B, and C, respectively. The results of $\overline{L\mFB}$[+IPCs,TC-I,II] for set A.1 are shown in Table \ref{tab:LBv1A1}, while those for set A.2 across the three variants are presented in Tables~\ref{tab:LBv1A2}-\ref{tab:LBv3CA2}. For set B, the results of $\overline{L\mFB}$[+IPCs,TC-I,II] are detailed in Table \ref{tab:LBv1B}. Finally, the results for set C across the three variants are provided in Tables~\ref{tab:LBv1C}-~\ref{tab:LBv3cC}. More details on the comparison between the lower bounds are presented in Tables~\ref{tab:compareLBA2},~\ref{tab:compareLBB}, and~\ref{tab:compareLBC} for sets A.2, B, and C, respectively.
	
	\subsubsection{Performance of the BC algorithm with different cut configurations} \label{sec:expbc}
	
	This section evaluates BC algorithm variants with alternative cut configurations to assess their combined effectiveness.
	
	Since all instances in set A.1 were solved to optimality at the root node, we applied the BC variants $\overline{\mFB}$[+IPCs,TC-I,II],  $\overline{\mFB}$[+A-GRC], and  $\overline{L\mFB}$[+RC,$\pm,-,+$] to the remaining sets: A.2, B, and C. Detailed performance results are presented in the Appendix: Tables~\ref{tab:BCv1A2}-\ref{tab:BCv3cA2} for set A.2, Tables~\ref{tab:BCv1B}-\ref{tab:BCv3cB} for set B, and Tables~\ref{tab:BCv1C}-\ref{tab:BCv3cC} for set C. Comparative summaries for each set are also provided: Table~\ref{tab:compareBCA2} for set A.2, Table~\ref{tab:compareBCB} for set B, and Table~\ref{tab:compareBCC} for set C. 
	
	The results of the best-performing BC variant for each set are highlighted in Tables~\ref{tab:BCA2}-\ref{tab:BCC}, corresponding to sets A.2, B, and C, respectively. These tables include the metrics discussed in previous sections, along with: $\#opt$, the number of instances solved to optimality; $\#opt_0$, the number of instances solved to optimality at the root node; $grb-nodes$, the average number of nodes generated by Gurobi; and $t_{\text{sep}}(s)$, the average CPU time (in seconds) spent in the separation procedures.
	
	As seen in the previous section, the BC algorithm exhibited strong performance on set A.1, solving all instances to optimality at the root node in under 4 seconds (see Table~\ref{tab:LBv1A1} for details) and solving similar instances with 1000 and 2000 nodes to optimality in less than 30 seconds on average. To further evaluate its performance, we compared it against formulation $\mFA$, solved using Gurobi. $\mFA$ solved all instances of A.1 with $n=100$ to optimality in less than 237 seconds on average. However, as the number of nodes increased, $\mFA$ was able to solve only two out of five instances to optimality for the group $(300, 0.2, 0.2)$. For the remaining groups, the program terminated due to insufficient memory (see Table~\ref{tab:F1A1} for more details on $\mFA$). Therefore, the BC algorithm clearly outperforms $\mFA$ on set A.1. We attribute the strong performance of the BC algorithm in this set of instances to the fact that there are enough arcs in $\setap$, uniformly distributed across the graph, making it possible to transform an MPC into a feasible solution without reducing the number of covered nodes or increasing the number of paths. Note that all nodes are covered in the optimal solution for the instances in A.1.
	
	The results for set A.2 indicate that the variant incorporating all classes of cuts in sequence performed the best in most cases (see Table~\ref{tab:compareBCA2}). Accordingly, Table~\ref{tab:BCA2} presents the results for this variant. We observed from the implementations that the instances in this set were the most challenging for the BC variants compared to the other sets. The selected variant solved 12 out of 45 instances to optimality, while $\mFA$ solved all instances to optimality in under 328 seconds on average, except for four instances in the group $(500, 0.5, 0.0003)$ (see Table~\ref{tab:F1A2}). For these four instances, the average Gurobi optimality gap was $3.454\%$, with a $\%best-gap$ of zero, indicating that $\mFA$ yielded superior solutions compared to the BC variants. We think the difficulty of these instances arises from the fact that there are few arcs in $\setap$, which complicates the task of covering all nodes in the optimal solution. This, in turn, makes it challenging for the BC algorithm to transform the initial root-node solution---one that covers all nodes---into a feasible solution with paths containing at least one arc from $\setap$ (see Table~\ref{tab:LF2A2} for the results of $\overline{L\mFB}$). It is also worth noting that the difficult instances were chosen from a pool of 150 generated instances for each group, with the five most challenging selected, and many of these 150 instances were solved quickly to optimality by the BC variants.
	
	\begin{table}
		\scriptsize
		\setlength{\tabcolsep}{0pt}
		\caption{\label{tab:BCA2} Results of $\overline{\mFB}$[+RC,$\pm,-,+$] on A.2.}
		\begin{center}
			\begin{tabular*}{\textwidth}{@{\extracolsep{\fill}}rrrrrrrrrrrr}
				\toprule
				$n$&$p_{a}$&$p_{a'}$&$\#inst$&$\#opt$&$\#opt_0$&$\#null-obj$&$\#cuts$&$grb-nodes$&$t_{sep}(s)$&$t(s)$&$\%best-gap$\\
				
				\midrule
				100	&	0.2	&	0.01	&5&	2	&	0	&	0	&	646,361.8	&	252,217.0	&	52.533	&	2,171.037	&	0.431	\\
				&	0.5	&	0.005&5	&	0	&	0	&	1	&	3,160,615.4	&	193,089.2	&	7.326	&	3,600.703	&	4.608	\\
				&	0.8	&	0.001	&5&	3	&	1	&	0	&	1,434,728.8	&	41,541.2	&	73.617	&	1,459.996	&	98.592	\\
				\midrule
				300	&	0.2	&	0.001	&5&	1	&	0	&	0	&	1,123,671.2	&	140,219.4	&	132.907	&	2,881.114	&	55.867	\\
				&	0.5	&	0.0003	&5&	1	&	0	&	0	&	3,639,333.6	&	73,126.2	&	454.872	&	2,888.236	&	50.280	\\
				&	0.8	&	0.0001	&5&	3	&	1	&	0	&	2,641,659.6	&	19,376.4	&	960.649	&	2,086.653	&	11.674	\\
				\midrule
				500	&	0.2	&	0.0003	&5&	0	&	0	&	0	&	1,097,705.6	&	108,667.4	&	284.691	&	3,603.035	&	35.316	\\
				&	0.5	&	0.0003	&5&	0	&	0	&	0	&	4,285,981.2	&	40,196.6	&	982.717	&	3,619.633	&	1,278.362	\\
				&	0.8	&	0.0001	&5&	2	&	1	&	1	&	1,669,467.4	&	6,628.4	&	1,759.262	&	2,562.877	&	84.663	\\
				
				\bottomrule
			\end{tabular*}
		\end{center}
	\end{table}
	
	\begin{table}
		\scriptsize
		\setlength{\tabcolsep}{0pt}
		\caption{\label{tab:BCB} Results of $\overline{\mFB}$[+IPCs,TC-I,II] on B.}
		\begin{center}
			\begin{tabular*}{\textwidth}{@{\extracolsep{\fill}}rrrrrrrrr}
				\toprule
				Group&$\#inst$&$\#opt$&$\#opt_0$&$\#cuts$&$grb-nodes$&$t_{sep}(s)$&$t(s)$&$\%best-gap$\\
				
				\midrule
				$G_1$	&	5&5	&	0	&	2,116.6	&	1,062.2	&	0	&	0.321	&	0	\\
				$G_2$	&	5&5	&	0	&	25,052.6	&	11,273.6	&	0	&	14.919	&0	\\
				$G_3$	&	5&4	&	0	&	205,428.2	&	239,446.6	&	0	&	744.497	&	0	\\
				$G_4$	&	5&4	&	0	&	627,299.0	&	299,506.8	&	0.004	&	1,329.738	&	3.222	\\
				$G_5$	&	5&4	&	0	&	238,346.4	&	111,251.8	&	0.001	&	771.869	&	0	\\
				$G_6$	&	5&2	&	0	&	801,363.8	&	383,725.6	&	0.003	&	2,303.893	&	0.16	\\
				$G_7$	&	5&4	&	0	&	116,830.8	&	94,327.2	&	0.131	&	928.903	&	0	\\
				$G_8$	&	1&0	&	0	&	153,998.0	&	68,688.0	&	0.007	&	3,601.425	&	0	\\
				
				\bottomrule
			\end{tabular*}
		\end{center}
	\end{table}
	
	\begin{table}
		\scriptsize
		\setlength{\tabcolsep}{0pt}
		\caption{\label{tab:BCC} Results of $\overline{\mFB}$[+IPCs,TC-I,II] on C.}
		\begin{center}
			\begin{tabular*}{\textwidth}{@{\extracolsep{\fill}}rrrrrrrrrrr}
				\toprule
				$n$&$p_{a}$&$p_{a'}$&$\#inst$&$\#opt$&$\#opt_0$&$\#cuts$&$grb-nodes$&$t_{sep}(s)$&$t(s)$&$\%best-gap$\\
				
				\midrule
				100	&	0.2	&	0.2	&	5	&	5	&	5	&	85.0	&	1.0	&	0	&	0.033	&	0	\\
				&		&	0.5	&	5	&	5	&	1	&	2,322.2	&	6,430.8	&	0	&	1.159	&	0	\\
				&		&	0.8	&	5	&	5	&	2	&	129.2	&	229.0	&	0	&	0.096	&	0	\\
				&	0.5	&	0.2	&	5	&	5	&	4	&	554.8	&	199.8	&	0	&	0.134	&	0	\\
				&		&	0.5	&	5	&	4	&	0	&	1,751,951.6	&	2,259,141.2	&	0.412	&	724.581	&	0	\\
				&		&	0.8	&	5	&	5	&	1	&	3,096.6	&	1315.6	&	0	&	0.797	&	0	\\
				&	0.8	&	0.2	&	5	&	5	&	5	&	37.6	&	1.0	&	0	&	0.022	&	0	\\
				&		&	0.5	&	5	&	5	&	0	&	38,698.2	&	4,037.6	&	0.009	&	4.851	&	0	\\
				&		&	0.8	&	5	&	5	&	1	&	453.0	&	67.0	&	0	&	0.174	&	0	\\
				\midrule
				300	&	0.2	&	0.2	&	5	&	5	&	0	&	2,866.6	&	952.8	&	0	&	1.585	&	0	\\
				&		&	0.5	&	5	&	5	&	0	&	48,624.0	&	22,662.0	&	0	&	34.312	&	0	\\
				&		&	0.8	&	5	&	5	&	0	&	2012.0	&	1,216.0	&	0	&	2.037	&	0	\\
				&	0.5	&	0.2	&	5	&	4	&	1	&	225,028.0	&	128,198.4	&	0.001	&	724.553	&	0	\\
				&		&	0.5	&	5	&	3	&	0	&	1,798,611.2	&	327,273.4	&	0.132	&	1,535.603	&	0.457	\\
				&		&	0.8	&	5	&	3	&	0	&	539,172.6	&	216,082.2	&	0.006	&	1,761.739	&	0.136	\\
				&	0.8	&	0.2	&	5	&	5	&	3	&	2,505.0	&	393.6	&	0	&	2.663	&	0	\\
				&		&	0.5	&	5	&	0	&	0	&	1,635,810.4	&	255,491.2	&	0.037	&	3,600.442	&	0	\\
				&		&	0.8	&	5	&	4	&	0	&	573,132.6	&	77,184.4	&	0.004	&	774.897	&	0	\\
				\midrule
				500	&	0.2	&	0.2	&	5	&	2	&	0	&	2,102,109.4	&	682,378.4	&	0.017	&	2,274.351	&	0.122	\\
				&		&	0.5	&	5	&	4	&	0	&	486,017.2	&	504,971.0	&	0.076	&	1,383.979	&	0	\\
				&		&	0.8	&	5	&	3	&	0	&	631,104.4	&	564,780.0&	0.015	&	1,448.953	&	0	\\
				&	0.5	&	0.2	&	5	&	2	&	0	&	1,150,957.8	&	390,233.8	&	0.046	&	2,168.779	&	0	\\
				&		&	0.5	&	5	&	0	&	0	&	1,186,512.0	&	250,625.0	&	0.602	&	3,600.474	&	0.655	\\
				&		&	0.8	&	5	&	0	&	0	&	392,207.6	&	150,823.4	&	0.017	&	3,600.654	&	0.898	\\
				&	0.8	&	0.2	&	5	&	5	&	5	&	66.8	    &	1.0	&	0.002	&	0.525	&	0	\\
				&		&	0.5	&	5	&	0	&	0	&	550,922.4	&	52,420.8	&	0.099	&	3,600.967	&	0	\\
				&		&	0.8	&	5	&	1	&	0	&	431,369.2	&	34,284.0	&	0.146	&	2,776.744	&	0	\\
				
				\bottomrule
			\end{tabular*}
		\end{center}
	\end{table}
	
	For set B, $\overline{\mFB}$[+IPCs,TC-I,II] performed the best on average (see Table~\ref{tab:compareBCB}). Note that the BC algorithm was terminated for groups $G_6$ and $G_7$ before reaching the one-hour time limit for some instances in the third variant. Therefore, we reduced the time limit to 1800 seconds for these cases, reflected in the gaps for groups $G_6$ and $G_7$. The results for $\overline{\mFB}$[+IPCs,TC-I,II] are summarized in Table~\ref{tab:BCB}. This variant solved 28 out of 36 instances to optimality, with an average $\%best-gap$ of $0.423\%$. In comparison, $\mFA$ solved $3$ out of $5$ instances in $G_1$ to optimality, while the remaining groups experienced program termination due to insufficient memory (see Table~\ref{tab:F1B}). Note that $G_1$ contains $n=85$, with $p_{a} \approx 0.6$ and $p_{a'} \approx 0.6$. In this set, not all nodes can be covered in an optimal solution, even though $\setap$ has a sparsity of approximately $60\%$. This is because of the particular distribution of the arcs in $\setap$: If an arc $(i, j)$ is in $\setap$, all arcs $(k, l)$ where $k$ precedes or coincides with $i$, and $l$ succeeds or coincides with $j$, are also included in the DAG, which has a transitive structure.
	
	For set C, $\overline{\mFB}$[+IPCs,TC-I,II] also performed the best on average (see Table~\ref{tab:compareBCC}). This variant solved 95 out of 135 instances to optimality, with an average $\%best-gap$ of $0.084\%$. Recall that the instances in set C share the same characteristics as those in set B, and similar to set B, not all nodes can be covered in an optimal solution despite the sparsity of $\setap$ being greater than $20\%$. We can observe from Table~\ref{tab:BCC} that as the value of $n$ increases, the performance of the BC algorithm decreases. For parameters $p_a$ and $p_{a'}$, instances with values of $0.5$ for both are generally the most challenging. When compared to $\mFA$, this model solved 27 instances to optimality, including all instances with $n=100$ and $p_{a'}=0.2$ (see Table~\ref{tab:F1C}). Furthermore,  given the null $\%best-gap$ for instances with $n=100$ and $p_{a}=0.2$ and the fact that these instances were all solved to optimality by the BC variant, the solutions returned by  $\mFA$ for these instances are also optimal. For the remaining instances not listed in Table~\ref{tab:F1C}, the program terminated prematurely due to insufficient memory.
	
	When comparing the results of $\mFA$ with the BC variants across the three instance sets, it appears that the performance of $\mFA$ is primarily influenced by the size of the problem, which increases with higher values of $n$, $p_a$, and $p_{a'}$. In contrast, the BC variants, in addition to being affected by these parameters, are more sensitive to the structure of the graph and the distribution of the arcs of $\setap$.
	
	The results can be summarized as follows:
	\begin{itemize} 
		\item Set A.1: 100\% of the 135 instances were solved to optimality at the root node in less than 4 seconds on average using the $\overline{\mFB}$[+IPCs,TC-I,II] BC variant;
		\item Set A.2: About 27\% of the 45 instances were solved to optimality using $\overline{\mFB}$[+RC,$\pm,-,+$] in an average CPU time of 2763.698 seconds.
		\item Set B: About 78\% of the 36 instances were solved to optimality using $\overline{\mFB}$[+IPCs,TC-I,II] in an average CPU time of 1211.945 seconds.
		\item Set C: About 70\% of the 135 instances were solved to optimality using $\overline{\mFB}$[+IPCs,TC-I,II] in an average CPU time of 1112.041 seconds.
	\end{itemize}
	
	Across all BC variants and \mFA, 304 out of 351 instances (about 87\%) were solved to optimality. The generated instances, along with the best solutions obtained for each, will be made available on the first author's web page.

	\section{Conclusions}\label{sec:con}
	
	We investigated a variant of the \texttt{MPC} problem on acyclic digraphs (DAGs), where each path was required to include at least one arc from a specified subset of arcs. We established that the feasibility problem was strongly $\mathcal{NP}$-hard for arbitrary DAGs but solvable in polynomial time when the DAG was the transitive closure of a path.
	
	Given the potential infeasibility of the problem, we focused on maximizing the number of nodes covered using the minimum number of node-disjoint paths, ensuring each path contained at least one arc from the specified subset. Two integer programming formulations were introduced for this problem, and we developed several valid inequalities to strengthen the linear programming relaxations. These were applied as cutting planes within a branch-and-cut framework. 
	
	Our approach was implemented and tested on a diverse set of instances, including real-world cases from airline crew scheduling, and demonstrated strong effectiveness in solving the problem. The algorithms showed varying performance depending on the DAG type (transitive or non-transitive), the distribution of the arcs of $\setap$ in the DAG, and the sparsity of $\setap$. Overall, the variant using the classes IPCs, TC-I, and TC-II performed well in the majority of the classes, and these results were further enhanced mainly for A2 and C with the addition of +RC,$\pm,-,+$. Overall, 270 instances out of 351 instances (77\%) were solved to optimality, and if we consider all the BC variants and \mFA, 304 out of 351 instances (87\%) were solved to optimality.
	
	For future work, the complexity of the restricted version of the crew scheduling problem, \texttt{CS-MPC-ARC}, which considers the rest block constraints, remains an open question. Investigating this complexity could further enhance our understanding of the minimal components responsible for the hardness of the crew scheduling problem. It would also be interesting to test the performance of different branching and node selection strategies in the branch-and-cut algorithm. Additionally, exploring branch-and-price algorithms, where the pricing problem is restricted to finding feasible paths and comparing them with the branch-and-cut algorithm, could provide valuable insights.
	
	\section*{Acknowledgments}
	The authors thank Mohand Ait Alamara and Séverine Bonnechère at Air France for providing concrete instances, initially used in~\cite{tellache2024linear}, some of which have also been utilized in this paper.
	
	\bibliographystyle{unsrtnat}
	\bibliography{biblio}
	
	
	\appendix
	
	\bmsection{Detailed computational results}\label{sec:appendix}
	
	This section presents detailed results from the runs of various BC algorithm variants, along with the results of formulation $(\mFA)$. 
	
	Tables~\ref{tab:F1A1}-~\ref{tab:F1C} present the results of formulation  $(\mFA)$, while Tables~\ref{tab:LF2A1}-~\ref{tab:LF2C} show the results of $\overline{L\mFB}$\ across the different sets. The results for various BC variants at the root node on each set of instances are summarized in Tables~\ref{tab:LBv1A1}-\ref{tab:LBv3cC}. Note that for set A.1, we present results for $\overline{L\mFB}$[+IPCs,TC-I,II] only, as it yielded optimal solutions for all instances; testing other variants generally increases CPU time without improving the already optimal solution. For variants using A-GRC and RC (\(\pm\), \(-\), \(+\)), we employed two algorithms to find a maximum weighted independent set, the greedy algorithm GWMIN proposed in~\cite{SMK03} and the exact algorithm provided in Cliquer~\cite{niskanen2003cliquer}. Results for both are shown, with results for variants using Cliquer indicated in parentheses marked as ``(C)''; if not indicated, the GWMIN algorithm was used. For set B, all variants produced identical results, so we provide a single table for the results of $\overline{L\mFB}$[+IPCs,TC-I,II].
	
	Tables~\ref{tab:compareLBA2},~\ref{tab:compareLBB}, and~\ref{tab:compareLBC} present a comparison of the BC variants at the root node, along with the results of  $\overline{L\mFB}$. The metric \( \#bs \) indicates the number of times a given variant achieves the best objective value, while \( \%obj \) quantifies the gap between the average objective value of a variant and the best objective value across all variants, calculated as: $ 
	\%obj = \left| \frac{\sum_{i=1}^5 obj_i - \sum_{i=1}^5 best\_obj_i}{\sum_{i=1}^5 obj_i} \right|,$
	where \( obj_i \) is the objective value for the \( i^{th} \) instance under a given variant, and \( best\_obj_i \) is the best objective value for the \( i^{th} \) instance across all variants. We introduced \( \%obj \) alongside \( \%best-gap \) because the latter does not account for undefined gaps, which occur when the objective value is null. In certain cases, \( \%best-gap \) might suggest that a variant solving only one instance with a non-null objective value performs better than another variant that solves the same instance with the same objective value and an additional instance with a non-null objective value. This happens because \( \%best-gap \) is averaged over fewer instances in the first case.
	
	The results for the BC variants across the three sets of instances are summarized in Tables~\ref{tab:BCv1A2}-\ref{tab:BCv3cA2},\ref{tab:BCv1B}-\ref{tab:BCv3cB}, and\ref{tab:BCv1C}-\ref{tab:BCv3cC}. Comparisons between the BC variants for each set of instances are provided in Tables~\ref{tab:compareBCA2},\ref{tab:compareBCB}, and\ref{tab:compareBCC}.
	
	\begin{table}
		\scriptsize
		\setlength{\tabcolsep}{0pt}
		\caption{\label{tab:F1A1} Results of formulation $(\mFA)$ on  set A.1.}
		\begin{center}

		\end{center}
\end{table}

\end{document}